\newcommand{\newc}{\newcommand}
\newc{\beq}{\begin{equation}}
\newc{\eeq}{\end{equation}}
\newc{\kt}{\rangle}
\newc{\br}{\langle}
\newc{\beqa}{\begin{eqnarray}}
\newc{\eeqa}{\end{eqnarray}}
\newc{\pr}{\prime}
\newc{\longra}{\longrightarrow}
\newc{\ot}{\otimes}
\newc{\rarrow}{\rightarrow}
\newc{\h}{\hat}
\newc{\bom}{\boldmath}
\newc{\btd}{\bigtriangledown}
\newc{\al}{\alpha}
\newc{\be}{\beta}
\newc{\ld}{\lambda}
\newc{\sg}{\sigma}
\newc{\p}{\psi}
\newc{\eps}{\epsilon}
\newc{\om}{\omega}
\newc{\mb}{\mbox}
\newc{\tm}{\times}
\newc{\hu}{\hat{u}}
\newc{\hv}{\hat{v}}
\newc{\hk}{\hat{K}}
\newc{\ra}{\rightarrow}
\newc{\non}{\nonumber}
\newc{\ul}{\underline}
\newc{\hs}{\hspace}
\newc{\longla}{\longleftarrow}
\newc{\ts}{\textstyle}
\newc{\f}{\frac}
\newc{\df}{\dfrac}
\newc{\ovl}{\overline}
\newc{\bc}{\begin{center}}
\newc{\ec}{\end{center}}
\newc{\dg}{\dagger}
\newc{\tr}{\mbox{Tr}}
\newtheorem{proposition}{Proposition}[]
\newtheorem{lemma}{Lemma}[]
\newcommand\norm[1]{\left\lVert#1\right\rVert}
\newcommand\ip[1]{\left\langle#1\right\rangle}
\newcommand\ket[1]{\left|#1\right\rangle}
\newcommand\op[2]{\left|#1\right\rangle\left\langle#2\right|}
\theoremstyle{definition}
\theoremstyle{remark}
\begin{document}

\title{Tripartite mutual information, entanglement, and scrambling in permutation symmetric systems with an application to quantum chaos}

\author{Akshay Seshadri}
\author{Vaibhav Madhok}
\author{Arul Lakshminarayan}
\email[]{arul@physics.iitm.ac.in}
\affiliation{Department of Physics, Indian Institute of Technology Madras, Chennai 600036, India}
\date{}

\begin{abstract}

Many-body states that are invariant under particle relabelling, the permutation symmetric states, occur naturally when the system dynamics is described by symmetric processes or collective spin operators. We derive expressions for the reduced density matrix for arbitrary subsystem decomposition for these states and study properties of permutation symmetric states and their subsystems when the joint system is picked randomly and uniformly. Thus defining a new random matrix ensemble, we find the average linear entropy and von Neumann entropy which implies that random permutation symmetric states are marginally entangled and as a consequence the tripartite mutual information (TMI) is typically positive, preventing information from being shared globally. Applying these results to the quantum kicked top viewed as a multi-qubit system we find that entanglement, mutual information and TMI all increase for large subsystems across the Ehrenfest or log-time and saturate at the random state values if there is global chaos. During this time the out-of-time order correlators (OTOC) evolve exponentially implying scrambling in phase space. We discuss how positive TMI may coexist with such scrambling.

\end{abstract}

\maketitle

\section{Introduction}

Classical dynamical systems have a hierarchy of complexity from ergodic to mixing and $K-$ systems \cite{arnol1968ergodic}. Classical Hamiltonian systems that are nonintegrable are capable of displaying the highest amount of deterministic randomness dubbed chaos. Quantum chaos aims to address and extend these questions to the quantum domain \cite{Gutzwiller90, Haake}. Seminal works in this regard range from semiclassical methods utilizing classical unstable periodic orbits \cite{gutzwiller1971periodic}, energy level fluctuations and the use of random matrix theory \cite{Berry375, bohigas1984, robnik1985classical}, characteristics of semiclassical Wigner functions of chaotic eigenfunctions \cite{Berry77a, voros1976semi}. Study of model systems from two-dimensional billiards \cite{McDonald, heller1984bound} to quantum maps \cite{BERRY197926} was crucial in this development and  brought up phenomena including scarring of eigenfunctions and dynamical localization. Most of these works have a strong focus on the time-independent Schr{\"o}dinger's equation and the properties of the stationary states of a single particle. 

Few and many-particle quantum chaos has received attention, largely from a spectral statistics point of view, but also including dynamics and entanglement generation studies \cite{aberg1990onset, montambaux1993quantum, hsu1993level, van1994two, kudo2003unexpected, rabson2004crossover,jacquod1997emergence, georgeot1998integrability, LakSub2003, avishai2002level, santos2004integrability, kudo2004level, LakSub2005, KarthikSharmaLak2007}. The connections with topics ranging from thermalization in closed systems to information scrambling are currently being vigorously explored \cite{Rigol2016, MaldacenaSYK15}. Although nuclear physicists have long developed techniques such as the two-body-random matrix and embedded ensembles \cite{bohigas1971spacing,Kota} to deal with the spectral statistics of many-body systems. Ironically the nuclear many-body physics which motivated the canonical random matrix ensembles \cite{mehta2004random} is most relevant for single particle chaotic systems. With the advent of controlling quantum systems, evolving and measuring them, dynamical aspects of few and many-body physics and chaos have taken the center stage. Two recent experiments that preserve the purity of complex time-evolving states illustrate the richness of this domain \cite{Neill16, kaufman2016quantum}.

Quantum information science has added a new perspective and a new set of questions to the study of quantum chaos. Here, one is naturally led to consider how the dynamical generation of entanglement, quantum discord and other information theoretic quantities between quantum subsystems is connected with the chaotic dynamics of coupled classical degrees of freedom \cite{miller1999signatures, lakshminarayan2001entangling, bandyopadhyay2002testing, ScottCaves2003, jacquod2004semiclassical, Jacq2004Erratum, PetitJacq2006, Ghose, Wang2004, trail2008entanglement, madhok2015signatures, madhok2014information}. In this regard, the tensor product structure of quantum mechanics, essential for understanding systems with multiple degrees of freedom is crucial \cite{miller1999signatures, trail2008entanglement, lakshminarayan2001entangling, bandyopadhyay2002testing}. Attempts to address these questions has resulted in a better understanding of quantum phenomena like entanglement and decoherence by connecting time evolved states under quantum chaotic Hamiltonians to properties of random states.

Such studies address fundamental issues of complexity in quantum systems and are potentially applicable in quantum information processing, where quantum correlations like entanglement and quantum discord are considered to be a crucial resource. More recently, out-of-time-order correlators (OTOCs), which are linked to how large the commutator of observables can grow with time, have been studied extensively for their connection with chaos as well as ``information scrambling", and in particular, their growth till the Ehrenfest time has been investigated \cite{braunstein2013better, hartman2013time, shenker2014black, sekino2008fast, Lashkari2013, Maldacena2016, hosur2016chaos}. On similar lines, tripartite mutual information (TMI) has been associated with delocalization of information, or scrambling \cite{IyodaSagawa}, and connections between TMI and OTOC have been explored in this context \cite{hosur2016chaos}. It is conjectured that black holes are the fastest scramblers, and perhaps therefore, the most quantum chaotic of systems. Some studies argue that a negative TMI implies the scrambling of quantum information \cite{hosur2016chaos, IyodaSagawa}.

Many-body systems have an intimate connection with chaos, in as much as it arises due to non-integrability resulting from having fewer constants of motion than the total number of degrees of freedom. In the quantum domain, this coupling also results in quantum correlations between the local subsystems. Many quantum systems that are nonintegrable and display chaos, such as the currently intensely investigated Sachdev-Ye-Kitaev (SYK) model do not have apparent classical limits. However there are many-body systems whose collective observables have a classical limit that has a conventional integrability-chaos transition.  

In our study, we employ one such model where a multi-qubit system is collectively modeled as a kicked top  \cite{Haake, kus1987symmetry, kus88, Zyczkowski90} that transitions from regular to chaotic behaviour with suitable choice of parameters. Such Floquet, periodically forced systems which have natural realizations in quantum circuits \cite{emerson2003pseudo}, can give rise to a variety of dynamical features allowing us to explore connections between complexity of quantum chaos, properties of random states and dynamical generation of tripartite mutual information, OTOCs and entanglement. Like the SYK model, the kicked-top, considered as a many-body system, also involves all-to-all interaction of qubits, but the crucial difference is that the kicked top is not disordered, the source of the chaos in the system is from both an external uniform magnetic field and from periodic driving. 

The lack of disorder is central for enabling collective variables and for restricting dynamics to a subspace that is permutation symmetric.   
Thus in this paper, we consider in detail an ensemble of pure states that are uniformly selected from the permutation symmetric $N + 1$ dimensional subspace, that is {\it random permutation symmetric states} of $N$ qubits. We point out how to formulate the reduced density matrix for an arbitrary number of qubits in such states. A $Q$ qubit reduced density matrix can be written as a $(Q+1) \times (Q+1)$ dimensional matrix rather than a $2^Q$ dimensional one, and therefore we can scale up to large number of qubits easily. More crucially it implies that $Q$ qubit subsystems have an entropy, and hence entanglement, that cannot be larger than $\ln(Q+1)$ and in particular we show that typical states have an entropy that differs from this by a small number. Thus the states have an entanglement that is not a ``volume law" (proportional to $Q$), but more akin to critical spin chains \cite{Vidal2003} which follow a so-called area law \cite{latorre2005entanglement, popkov2005entangling, popkov2005logarithmic, barthel2006entanglement, vidal2007entanglement}. Thus there is qualitatively much less entanglement in random $N$~qubit permutation symmetric states than in generic states.

We study the eigenvalue distributions of the reduced density matrices of random permutation symmetric states and compare with the Marchenko-Pastur distribution which is valid for generic states \cite{marvcenko1967distribution}. In particular, for arbitrary sized subsystems we give analytical results for the average linear entropy of entanglement and provide estimates for the mutual informations based on linear entropy, as well as for the von Neumann entropy. While time-evolving states in a chaotic situation tends to these random states, we also study dynamical generation of quantum correlations as we evolve coherent states through repeated applications of the kicked top unitary. Dynamical behaviour of correlations like entanglement as a function of chaos for the kicked top viewed as a systems of qubits, have been undertaken in the past \cite{Ghose, Wang2004, madhok2015signatures, Madhok2018_corr}. However, such studies have largely focussed on single qubit and two qubit subsystems of the joint permutationally invariant state under evolution. In contrast, we consider subsystems containing arbitrary number of qubits and study their relevant correlations under chaotic dynamics.

Interestingly, we find that the dynamical behaviour of TMI is very similar to the behaviour of bipartite mutual information, which we refer to as mutual information (MI), and entanglement between subsystems under consideration. We also find that despite chaos in the system, TMI is positive for most states in the permutation symmetric subspace. Indeed, by applying L{\'e}vy's lemma to permutation symmetric systems, we show that most states will have a positive TMI when there are a large number of qubits in the system. This confirms the previous finding that TMI can be positive or negative for both integrable as well as non-integrable systems \cite{IyodaSagawa}, where non-integrable spin chains have been shown to result in positive TMI for a class of states that are prepared in the all-up state, which incidentally is obviously permutation symmetric. Thus it also appreciated that the sign of the TMI is dependent on the type of states that are evolved.

The positive TMI in the present work is a reflection of the permutation symmetry of the states and is connected, as we will show below, to the area-law scaling of the entanglement. Although the TMI is positive, the OTOCs as defined with symmetric collective operators can grow exponentially as we demonstrate. We compare the behaviour of TMI with that of OTOCs to further underline that TMI, sometimes defined as a metric of ``scrambling of quantum information", captures different aspects of quantum dynamics than the OTOCs. Another salient feature that comes from the study of permutation symmetric states is that for large enough subsystems the value of many correlations, such as the MI, TMI, and entanglement saturate to the permutation symmetric random state value after the Ehrenfest time that scales as $\ln(N)$. Thus while the sign of the TMI is not a quantum signature of classical chaos, the time it takes to saturate could be considered as one.

For completeness we define the TMI between 3 subsystems $A,B$ and $C$ as
\beq 
\label{eqn:tmi_from_mi}
I_3(A:B:C) = I_2(A:B) + I_2(A:C) - I_2(A:BC), 
\eeq
where $I_2(X:Y) = H(X) + H(Y) - H(XY) = H(X) - H(X|Y)$ is the MI between $X$ and $Y$, and $H(\cdot)$ is the Shannon entropy classically and von Neumann entropy quantum mechanically. A negative TMI implies the joint system $BC$ contains more information about the ``input" system $A$ than the  subsystems $B$ and $C$ individually. This is the classic case of when the whole is more than the parts. A related interpretation is that a negative TMI implies that the mutual information is monogamous, while a positive TMI implies that the same information is being shared by other parties. See \cite{IyodaSagawa, rangamani2015entanglement, rota2016tripartite} for further elucidations and insights into this quantity.

In the case of kicked tops, since there is a mapping to qubits, we can talk about the TMI between the qubits and find $I_3(n_1, n_2, n_3)$ where $n_i$ are the number of qubits in three different non-overlapping subsystems. Due to permutation symmetry it does not matter which qubits belong to the partitions. For instance $I_3(1, 1, 1)$ is the TMI between any 3 qubits. While the TMI is sensitive to the nature of classical dynamics, chaos does not imply a negative TMI due to permutation symmetry. In contrast, in the absence of this symmetry it is easy to see that typical states have a negative TMI \cite{rangamani2015entanglement}.

The remainder of this paper is organized as follows. In sections II and III  we study the properties of permutation symmetric states and their subsystems, and the properties of random permutation symmetric states including the eigenvalue distributions of the reduced density matrices. Reviewing the essential ideas, we derive analytic expressions for the typical entanglement and information theoretic quantities of a random state when we are restricted to the permutationally invariant subspace of the full tensor product space. This is of relevance here given the symmetry of the system. In section IV, we discuss the dynamical behaviour of tripartite mutual information and OTOCs, and explore the behaviour of ensemble average of TMI. Our results are discussed and summarized in Sec V.

\section{Permutation symmetric states}
As has been noted in the introduction, previous studies on qubit entanglement in the kicked top has been restricted to one or two qubits at the most. On the other hand the most well-studied case of ``block entanglement" concerning entanglement of a large number of spins (typically one-half) with others, especially in the context of random permutation symmetric states, is largely unexplored in this model. Recent works like \cite{moreno2018all} consider the Schmidt decomposition of various bipartitions of Dicke states but do not deal with typicality, randomness and quantum chaos in this context. Also previous works such as \cite{popkov2012reduced,markham2011entanglement, devi2012majorana, bohnet2016partial, wang2002pairwise, baguette2014multiqubit} have studied the reduced density matrices of permutationally invariant systems and their entanglement. Our approach here is to focus on generic pure permutation symmetric states with a view of defining an ensemble of them that would be useful in studies of quantum chaos as for example in the kicked top which we subsequently analyze in detail.

This presents an opportunity to study random states restricted to permutation symmetric subspaces. While random states on the whole Hilbert space is well-studied essentially using methods of random matrix theory, ensembles within such  restrictions remain by and large open. Recently an experiment used three qubits ($j = 3/2$) to explore the kicked top Floquet unitary operator and claimed evidence for thermalization in the chaotic regime \cite{Neill16}. Thus a critical examination of the permutation symmetric subspace may also be warranted from such viewpoints. As the stationary and time evolved states of the kicked top are permutation symmetric states of $2j$ qubits, we consider in this section properties of such states that will interest us.

Consider a system of $N$ qubits. If this system has permutation symmetry, then the effective dimension of the system is $N + 1$ instead of $2^N$, an exponential difference. The ``standard" basis vectors for such a permutation symmetric $N$-qubit system is given by $N + 1$ orthonormal states known as Dicke states \cite{Dicke}. Say $\{\ket{i} = \ket{\text{binary expansion of } i} |\ 0 \leq i \leq 2^N - 1\}$ is the computational basis for the $N$-qubit system. The most natural basis for the permutation symmetric case is then obtained by taking appropriate linear combinations of the computational basis vectors. The states involved in any particular (permutation symmetric) basis vector must have the same number of zeroes and ones, so that the linear combination is invariant under an arbitrary permutation of the qubits.
These are given as follows.
\beq
\label{defn:stdpermsymcts} 
    \ket{m_N} = \frac{1}{c_N(m)} \sum_{\substack{0 \le i \le 2^N-1\\ w(i)=m}} \ket{i},\ 0 \leq m \leq N
\eeq
where,  $w(i)$ is the Hamming weight of $i$, which is the number of    1 in the binary expansion of $ i$. The normalization constant  is
\beq
 c_N(m) = \sqrt{N\choose m}=\sqrt{\frac{N!}{m! (N-m)!}}.
\eeq
It can be easily verified that the $N+1$ Dicke states are orthonormal and indeed permutation symmetric. An arbitrary $N$-qubit permutation symmetric pure state can be written as
\begin{equation}
    \ket{\psi} = \sum_{m=0}^{N} a_m \ket{m_N}, \;\;\sum_{m=0}^{N} |a_m|^2=1.
\label{eq:permpsi1}
\end{equation}
All states of the kicked top, eigenstates or time-evolving ones, viewed as a multi-qubit one are of this form, with $N = 2j + 1$.

In order to compute the $Q$-qubit ($Q < N$) reduced density matrix  we wish to write this state in terms of tensor products of Dicke states corresponding to the $Q$-qubit and the $(N - Q)$-qubit subsystems. That is, we want a $(Q + 1) \times (N - Q + 1)$ dimensional ``coefficient matrix" $A$ such that
\begin{equation}
    \label{eq:permpsi2}
    \ket{\psi} = \sum_{m= 0}^{Q} \sum_{n = 0}^{N - Q}A_{mn} \ket{m_Q} \ket{n_{N - Q}}
\end{equation}
 
Such an expansion is well defined as the state has to be permutation symmetric in the first $Q$ block of spins as well in its complement. It is useful to note that, every state of the form of Eq.~(\ref{eq:permpsi1}) can be written as in Eq.~(\ref{eq:permpsi2}), but not vice-versa.
Thus the matrix elements $A_{mn}$ are correlated in a specific way, which is also seen as they will involve the $a_k$ which are only $N+1$ in number. However, the advantage of writing $\ket{\psi}$ in this way is that the $Q$-qubit reduced density matrix (in the standard $Q$-qubit permutation symmetric basis) is simply $\rho_Q=A A^{\dagger}$ as the sub-block Dicke states are also orthonormal.

Now the only remaining task is to determine the matrix elements of $A$.  Observe that $\ket{m_Q} \ket{n_{N - Q}}$ will contribute a sum of $N$-qubit computational basis states corresponding to the Dicke state $\ket{(m + n)_N}$. It is important to note that these need not be equal; rather, the computational basis vectors obtained from the former Dicke states (tensor product) is contained in the latter Dicke state. Appropriately handling the normalization factors involved with the Dicke states, we see that $A_{mn}/(c_Q(m) c_{N-Q}(n)) = a_{m+n}/c_N(m+n)$ or 
\begin{equation}
    \label{eq:Amnarray}
    A_{mn} = \frac{c_Q(m) c_{N - Q}(n)}{c_N(m + n)} a_{m + n}.
\end{equation}

As an example if $N = 4$, an arbitrary permutation symmetric pure state is given in the Dicke basis as 
\beq
|\psi\kt = a_0|0_4\kt +a_1|1_4\kt+a_2|2_4\kt+a_3 |3_4\kt+a_4|4_4\kt.
\eeq
If 
\beq
A' = \left( \begin{array}{cccc} a_0          & a_1/2        &a_1/2        &a_2/\sqrt{6}\\
                              a_1/2        & a_2/\sqrt{6} &a_2/\sqrt{6} &a_3/2\\
                              a_1/2        & a_2/\sqrt{6} &a_2/\sqrt{6} &a_3/2\\
                              a_2/\sqrt{6} & a_3/2        &a_3/2        &a_4
             \end{array} \right).
\eeq
then
\beq
|\psi\kt = \sum_{i,j=0}^3  A'_{ij}|ij\kt,
\eeq
where the binary representation of $i,j$ represent the qubit states. Thus in the standard basis the reduced density matrix of two qubits $\rho_2$ is $A'A'^{\dagger}$. However as is clear the rank of $A'$ and consequently $\rho_2$ is only 3, and this therefore calls for a reduction of the matrix to a typically full-rank $3\times 3$ matrix. This is essentially the coefficient matrix in the permutation symmetric basis of $\{ |00\kt$, $(|01\kt+|10\kt)/\sqrt{2}$, $|11\kt \}$ and is given by 
\beq
A = \left( \begin{array}{ccc} a_0           & \sqrt{2} a_1/2 & a_2/\sqrt{6}\\
                            \sqrt{2}a_1/2 & 2 a_2/\sqrt{6} & \sqrt{2}a_3/2\\
                            a_2/\sqrt{6}  & \sqrt{2} a_3/2 & a_4
           \end{array} \right).
\eeq

In general a $Q$ qubit reduced density matrix is derived from a $2^Q \times 2^{N-Q}$ dimensional coefficient matrix $A'_{ij}=a_{w(i)+w(j)}/c_N(w(i)+w(j))$, with $0 \le i \le 2^Q-1$ and $0 \le j \le 2^{N-Q}-1$. However this is largely rank-deficient and it suffices to consider the typically full rank $(Q+1) \times (N-Q+1)$ dimensional array in Eq.~(\ref{eq:Amnarray}).

With this, the $Q$-qubit reduced density matrix, as mentioned before, is $\rho_Q = A A^{\dagger}$. Note that this matrix is $(Q + 1) \times (Q + 1)$ dimensional as we are expanding it in terms of the $Q$-qubit permutation symmetric standard basis. The problem of entanglement of $Q$ qubits is reduced to a linear problem in the number of qubits rather than the usual exponential one, a consequence of the permutation symmetry. Thus a two qubit reduced density matrix is a $3 \times 3$ one and hence at most rank-3, and the maximum entropy of such a state is $\log_23$, rather than 2 ebits of a general state of 2 qubits. In general a reduced density matrix of $Q$ qubits can have at most the entropy of $\log_2(Q+1)$ ebits as opposed to $Q+1$ ebits. Permutation symmetric states are far less entangled than generic ones.
Note that not all permutation symmetric random mixed states are of the form considered above. For example the two qubit state
\beq
a\,  |00\kt \br 00| + b \, (|01\kt \br 01| +|10\kt \br 10|)+c\, |11\kt \br 11|
\eeq 
is manifestly permutation symmetric but is not a mixture of Dicke states. Thus the reduced density matrices we study are a subset of permutation symmetric states, specifically those that can be derived as reduced states of larger permutation symmetric pure states. For a more formal treatment of the reduced density matrix of a permutation symmetric state see Appendix \ref{apdx:permsymmembedding}.

\section{Random permutation symmetric states and a new ensemble of random matrices}

In order to the study the evolution of permutation symmetric states under chaotic but permutation symmetric evolutions such as in the kicked top, we have to study the properties of random permutation symmetric states. For our purposes it suffices to define an ensemble of permutation symmetric pure states as random permutation symmetric states if the coefficients $a_m$ in Eq.~(\ref{eq:permpsi1}) are drawn from the uniform (Haar) measure on the $N + 1$ dimensional space. In other words their joint probability distribution is given by 
\beq
\label{eq:jpdfHaar}
P(\{a_m\})=\df{N!}{\pi^{N + 1}} \delta\left(1 - \sum_{m=0}^N |a_m|^2 \right).
\eeq

While properties of random states including participation ratio, Shannon entropy \cite{Haake} and extreme value statistics \cite{lakshminarayan2008} have been studied previously, we are interested in the properties of subsystems and hence in reduced density matrices as in Eq.~(\ref{eq:Amnarray}). More precisely, we are interested in the properties of $A_Q A_Q^{\dg}$, where $A_Q$ is constructed from the $N + 1$ complex random numbers $a_m$ as
\beq
\label{eq:PermSymmRandDenMats}
(A_Q)_{mn}=\sqrt{\df{\binom{Q}{m} \binom{N-Q}{n}}{\binom{N}{m+n}}}\; a_{m+n}, \;\; \rho_Q^{PS}= A_Q A_Q^{\dagger},
\eeq
with $0\le m \le Q, \, 0\le n \le N-Q$.
The normalization of $a_m$ guarantees that $\tr(A_Q A_Q^{\dg}) = 1$, as required for density matrices. 
The generally rectangular matrix $A_Q$ has strongly correlated elements as there are only $N$ independent (complex) random numbers while there are $(Q+1)(N-Q+1)$ entries. Thus $\rho^{PS}_{Q}$ represents a new ensemble of positive random matrices that is of relevance to the study of random or typical permutation symmetric states, modeling their reduced density matrices of subsystems having $Q$ of the total $N$ qubits. 

This ensemble has very different properties than the well-studied normalized Wishart or trace-constrained ensemble. 
If $G$ is a $ N_1 \times N_2$ dimensional matrix with complex entries whose real and imaginary parts are independently normally (zero centered) distributed, the ensemble
\beq
\label{eq:Wishart}
\rho^W_{N_1}=\frac{G G^{\dagger}}{\tr(G G^{\dagger})}
\eeq
is a normalized Wishart ensemble \cite{wishart1928generalised}. Its eigenvalues are distributed according to the Marchenko-Pastur law \cite{marvcenko1967distribution}. This is the ensemble of reduced density matrices of $N_1$ dimensional subsystems of pure states that are uniformly sampled from the Hilbert space of dimension $N_1 N_2$.

It is useful to compare $\rho^{PS}_Q$ with two Wishart ensembles that naturally present themselves, one that is relevant to a $Q$ qubit subsystem of random $N$ qubit states, so that $N_1=2^Q$ and $N_2=2^{N-Q}$, which we will denote as $W, 2^Q$ as the subsystem dimensionality is $2^Q$. The other is that of a subsystem of dimension $Q + 1$ in a randomly chosen bipartite pure state of dimension $(Q + 1)\times (N - Q + 1)$, that is $N_1=Q+1$ and $N_2=N-Q+1$, which we will denote as $W, Q + 1$.

For $Q = N/2$, without the combinatorial factors in Eq.~(\ref{eq:PermSymmRandDenMats}) the matrix $A_Q$ is a square Hankel matrix, and random Hankel matrices have been considered before in the literature \cite{bryc2006}, and the existence of the limiting spectral distribution has been established, although its explicit form remains unknown. It seems natural in this context to study the ensemble with the combinatorial factors that also ensure normalization of the corresponding density matrix. As we are interested in the spectrum $\{ \lambda_i, \; 1\le i \le Q+1\}$ of $\rho^{PS}_Q=A_Q A_Q^{\dagger}$, we will be interested in the square of the singular values of $A_Q$.

Figure \ref{fig:eigval_density_single} shows the distribution of the $\{ \lambda_i \}$  of the $12$-qubit random permutation symmetric ensemble. Shown are the eigenvalues for $Q = 1,\,2,\,5,$ and $6$, scaled by a factor $Q + 1$, and one sees $Q + 1$ peaks that are merging into a smooth distribution. Except for the case $Q = N/2$, the density vanishes at the origin. When $Q = N/2$ the density looks very close to the Marchenko-Pastur law, which after scaling by the factor $Q+1$ (i.e. $x = \lambda (Q + 1)$) is 
\beq
\label{eq:MP}
P_{MP}(x) = \frac{1}{2 \pi} \sqrt{\frac{4 - x}{x}}. 
\eeq

This is seen in  Fig.~ \ref{fig:eigval_density_multiple}  which has three different values of $N = 50$, $100$ and $200$ qubits. The eigenvalues have been obtained from $N/2$-qubit subsystems of $2500$ randomly generated $N$-qubit permutation symmetric states, and have been scaled by a factor of $N/2 + 1$. It can be seen that the scaling causes all the distributions to collapse onto each other, indicating the existence of a limiting distribution. This limiting distribution, although similar to Marchenko-Pastur, does differ from it slightly in the bulk. In addition, numerical results indicate that the distribution does not diverge at the origin and has an exponentially decaying tail unlike the Marchenko-Pastur distribution that diverges at the origin and has a finite support in $[0, 4]$.

\begin{figure}[!htbp]
    \includegraphics[width=0.48\textwidth]{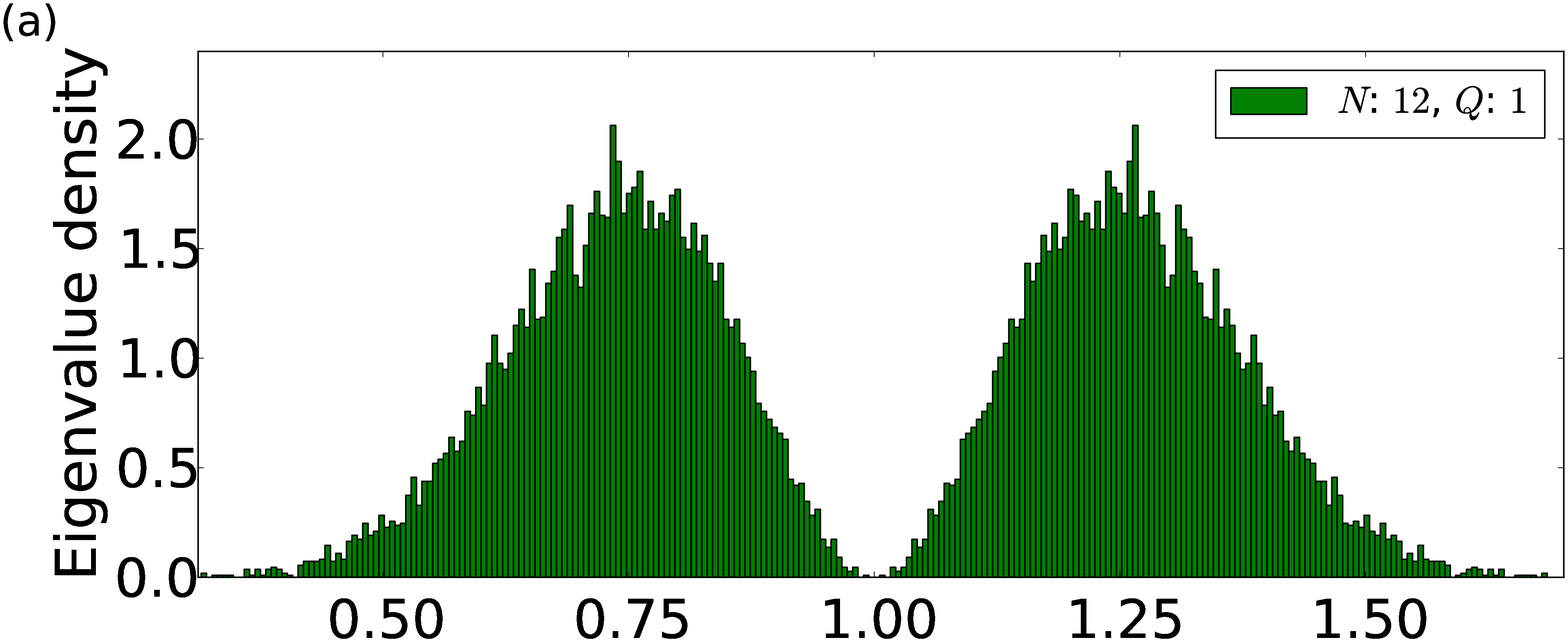}
    \\
    \includegraphics[width=0.48\textwidth]{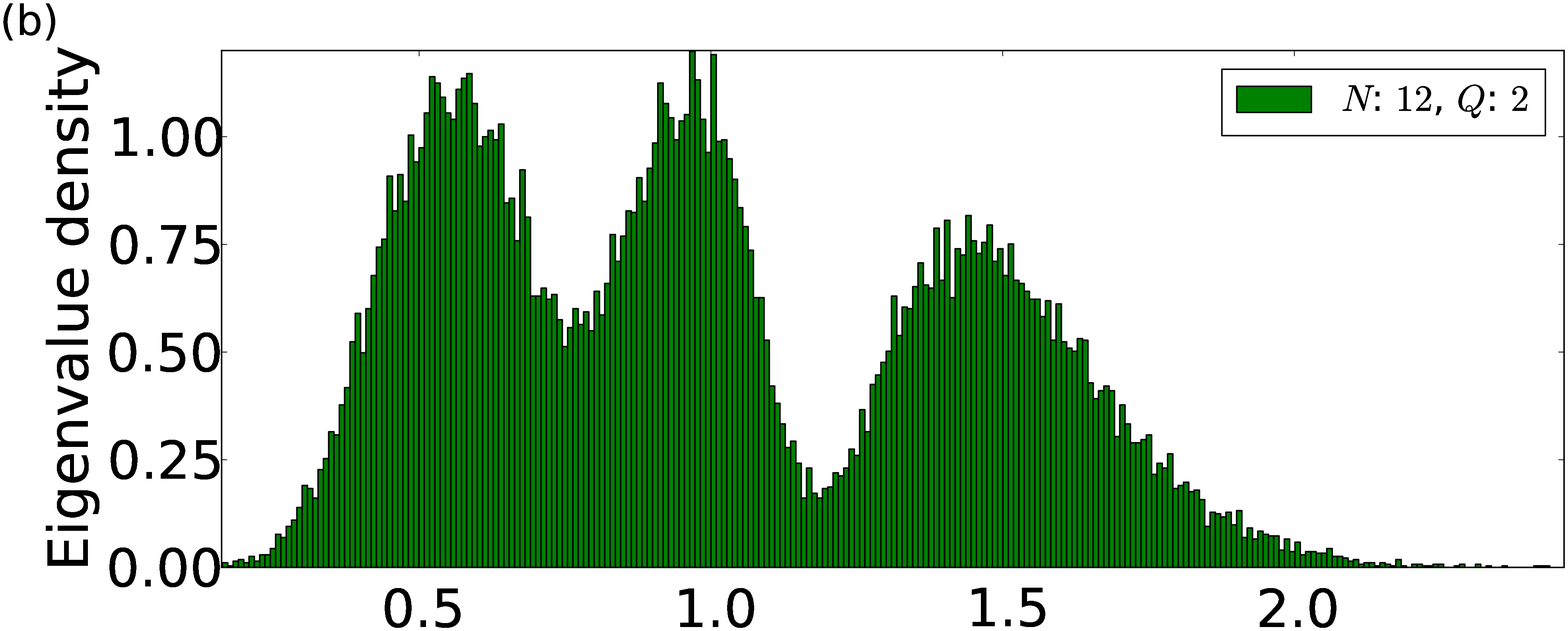}
    \\
    \includegraphics[width=0.48\textwidth]{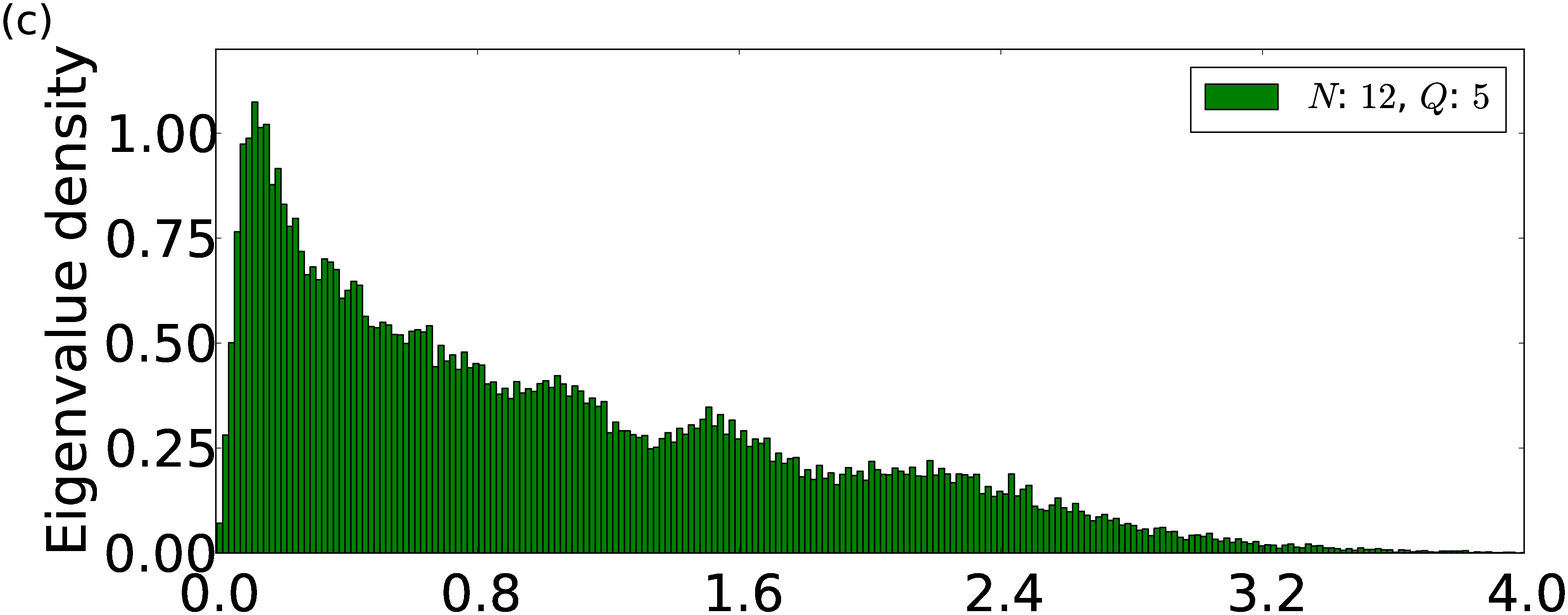}
    \\
    \includegraphics[width=0.48\textwidth]{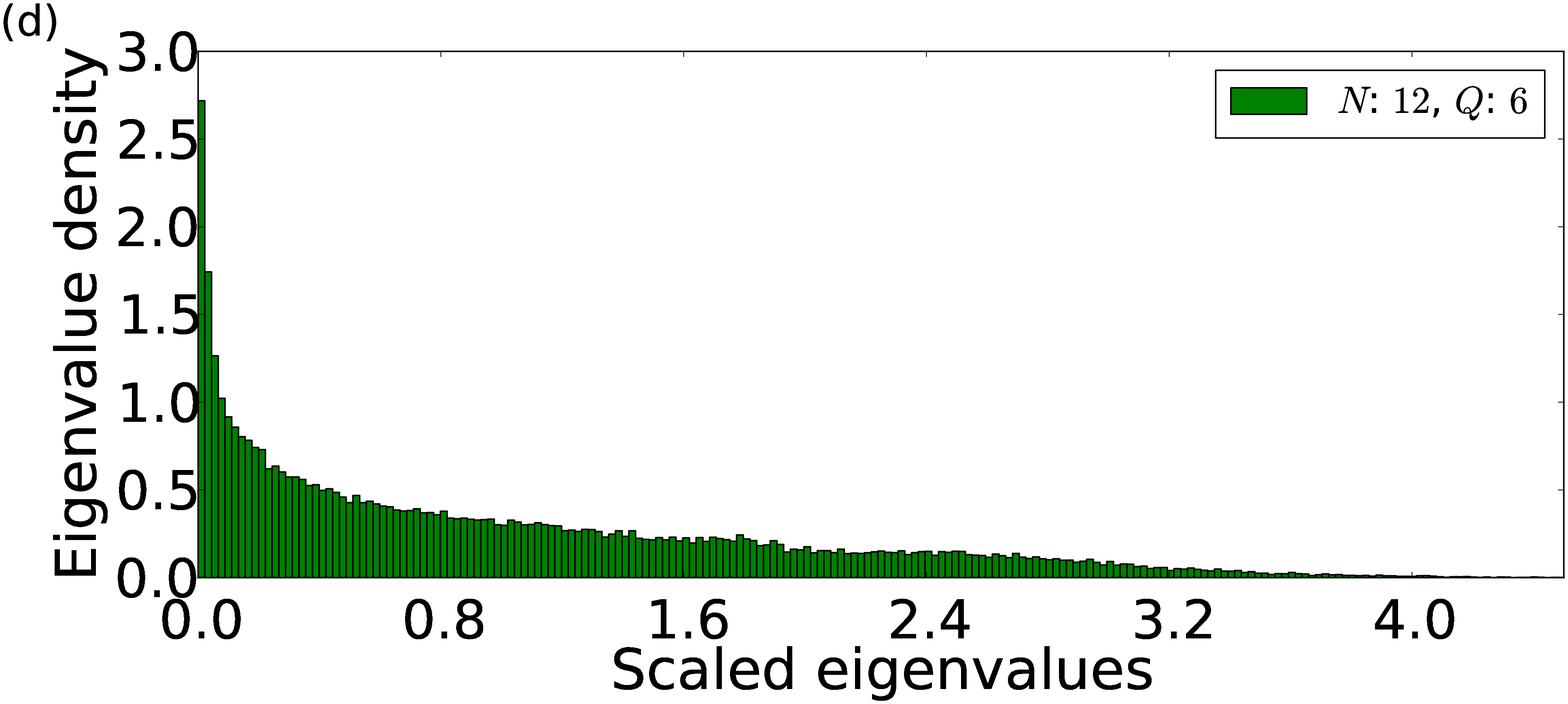}
    
    \caption{(Color online) The histogram of eigenvalues of the $Q$-qubit reduced density matrix of an $N$-qubit permutation symmetric system. The eigenvalues have been obtained from subsystems of $10000$ randomly generated $N$-qubit permutation symmetric states. The histogram has been scaled such that the area under the curve is one, so that the y-axis is representative of eigenvalue density.}
    \label{fig:eigval_density_single}
\end{figure}

\begin{figure}[!htbp]
    \includegraphics[width=0.49\textwidth]{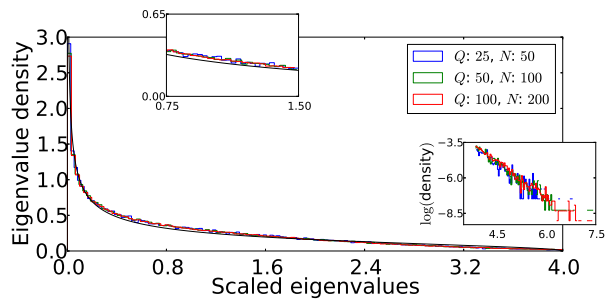}
    \caption{(Color online) The histogram of eigenvalues of the block subsystem for a permutation symmetric system containing (a) $N = 50$, (b) $N = 100$ and (c) $N = 200$ qubits. The eigenvalues have been obtained from $N/2$-qubit subsystems of $2500$ randomly generated $N$-qubit permutation symmetric states, and have been scaled by a factor of $N/2 + 1$. The histogram has been scaled such that the area under the curve is one, so that the y-axis is representative of eigenvalue density. The Marchenko-Pastur distribution (black line) has been plotted for comparison. The rightmost inset shows the logarithm of the density in the tail region, indicating an exponentially decaying tail.}
    \label{fig:eigval_density_multiple}
\end{figure}

\subsection{Average purity and linear entropy}

The superscript label in $\rho^{PS}_Q$ is now dropped for brevity. The easiest nontrivial quantity that maybe found for the ensemble is its average purity $\br \tr \rho_Q^2 \kt_{PS}$ and hence the average linear entropy $\br S_Q^l \kt_{PS}=1-\br \tr \rho_Q^2 \kt_{PS}$. Here $\rho_Q=A_Q A_Q^{\dg}$, where $A_Q$ is from the random ensemble as described in Eq.~(\ref{eq:PermSymmRandDenMats}) and Eq.~(\ref{eq:jpdfHaar}), while $\br \, \cdot \, \kt_{PS}$ indicates averaging with respect to this ``permutation symmetric" ensemble.

\begin{widetext}
\beq
\br \tr \rho_Q^2 \kt _{PS} = \left \br \sum_{i=1}^{Q+1} \lambda_i^2 \right \kt _{PS} =\sum_{k,j=0}^{Q} \binom{Q}{k}\binom{Q}{j} \sum_{m,n=0}^{N-Q} \binom{N-Q}{m}\binom{N-Q}{n} \dfrac{\br a_{k+m}a_{j+n}a^*_{k+n}a^*_{j+m}\kt }{\sqrt{\binom{N}{k+m}\binom{N}{j+n}\binom{N}{k+n}\binom{N}{j+m}}}.
\eeq
\end{widetext}

If the $a_i$ are drawn from the distribution Eq.~(\ref{eq:jpdfHaar}) it is easy to see that $\br |a_{i}|^2 |a_{j}|^2\kt =1/[(N+1)(N+2)]$ if $i \neq j$ and is $2/[(N+1)(N+2)]$ if $i=j$. These are the only non-zero average terms that are needed to show that
\beq
\br a_{k+m}a_{j+n}a^*_{k+n}a^*_{j+m}\kt=\df{1}{(N+1)(N+2)} \left( \delta_{mn}+\delta_{kj}\right).
\eeq
Using this, the average purity becomes,
\beq
\begin{split}
\br \tr \rho_Q^2 \kt _{PS} &= \df{1}{(N+1)(N+2)} \left[ 
\sum_{k,j,m} \dfrac{\binom{Q}{k}\binom{Q}{j} \binom{N-Q}{m}^2}{\binom{N}{k+m}\binom{N}{j+m}} \right. \\  & + \left. \sum_{k,m,n} \dfrac{\binom{Q}{k}^2\binom{N-Q}{m} \binom{N-Q}{n}}{\binom{N}{k+m}\binom{N}{k+n}}\right].
\end{split}
\eeq
Apparently intimidating, the combinatorial sums are in fact benign, the first is
\beq
\label{eq:AvgPurityCombIdentity}
\sum_{k,j,m} \dfrac{\binom{Q}{k}\binom{Q}{j} \binom{N-Q}{m}^2}{\binom{N}{k+m}\binom{N}{j+m}}=\df{(N+1)^2}{(N-Q+1)},
\eeq
and the second follows on replacing $Q$ with $N-Q$. 
Hence finally 
\beq
\begin{split}
\label{eq:AvgPurityLinEnt}
\br \tr  \rho_Q^2 \kt _{PS} = \df{N+1}{(Q+1)(N-Q+1)}, \\ \br S^l_Q \kt _{PS}=\df{Q(N-Q)}{(Q+1)(N-Q+1)}.
\end{split}
\eeq
To reiterate, the above is the average $Q$ qubit purity and linear entropy of random $N$ qubit permutation symmetric pure states. As expected, it is symmetric under the replacement $Q \rarrow N-Q$ .

Comparing with the Wishart ensembles, it is known \cite{Lubkin} that the average purity of a $M$ dimensional system in a $MN$ dimensional random pure state is $(M+N)/(1+MN)$. Using this we get 
\beq
\label{eq:avg:HS}
\begin{split}
 \br S_Q^l \kt _{W, 2^Q}=\df{(2^Q-1)(2^{N-Q}-1)}{2^N+1}, \\
 \br S_Q^l \kt _{W, Q+1}=\df{Q(N-Q)}{1+(Q+1)(N-Q+1)}.
\end{split}
\eeq
It is understandable that the second ensemble $W,Q+1$ is close in entropy to that of permutation symmetric states. That it is slightly smaller than that of permutation symmetric states is consistent with the behaviour of the density of eigenvalues. While the one for the $W,Q+1$ ensemble is the Marchenko-Pastur one that diverges at the origin and is sharply cut-off at 
$x=\lambda (Q+1) =4$, the PS ensemble seems to not diverge at the origin and extends to infinity with an exponential tail.

One quantity frequently used in previous studies is $S^l_1$, that is the linear entropy of a single qubit \cite{Wang2004, madhok2015signatures}. In this case:
\beq
\begin{split}
\br S_1^l \kt _{PS} =\f{1}{2}\left(1-\f{1}{N}\right), \\ \br S^l_1\kt _{W,2^Q} =\f{1}{2}\left(1-\f{3}{2^N+1}\right), \\ \br S^l_1\kt_{W,Q+1} =\f{1}{2}\left(1-\f{3}{2N+1}\right).
\end{split}
\eeq
Thus while both for permutation symmetric as well as asymmetric random states the average of the linear entropy single qubit density matrix approaches the thermalized value of a most mixed state ($=1/2$), the approach is algebraic in the number of qubits for the permutation symmetric states while it is exponential for the asymmetric states if it is 
assumed to be a $Q$ qubit subsystem.
 
For the case of $N = 2Q$ we have for large $N$:
\beq
\br S_{N/2}^l \kt _{PS} \approx 1-\f{4}{N},\, \br S_{N/2}^l \kt _{W,2^Q} \approx 1-\f{2}{2^{N/2}},
\eeq
again illustrating the algebraic rate of scaling to maximum entropy as compared to an exponential one for $Q$ qubits. The case of a $Q + 1$ dimensional subsystems of a random state ($W, Q+1$) has the same leading order deviation from $1$ as the PS states. At this level for large $N$ and $N = 2Q$ the linear entropies are indistinguishable. The differences however can be seen in quantities like the von Neumann entropy to which we turn to now.

\subsection{Average von Neumann entropy}

While the linear entropy is easy to calculate and find the average of, it is not additive, and in fact the von Neumann entropy has the special place as the measure of entanglement. In the absence of a 
j.p.d.f of the eigenvalues of the reduced density matrix or other features of the ensemble, we resort to numerical methods for this quantity. One thing that can be definitely said is that $S^{vN}_Q\leq \log(Q+1)$, as the rank of the reduced density matrix is at most $Q+1$ (The logarithms used in the paper are in base 2.)
In sharp contrast, for random non-symmetric states, the von Neumann entropy is bounded by $Q$ and the average differs from this by utmost an order 1 number.

Extensive numerical calculations support the following simple formula
\beq
\label{eq:avgvNPS}
\begin{split}
\br S^{vN}_Q \kt_{PS} = - \left \br \sum_{i=1}^{Q+1} \lambda_i \log \lambda_i \right \kt_{PS} \\ \approx \log(Q+1)-\alpha \f{Q+1}{N-Q+1}
\end{split}
\eeq
where $1 \leq Q \leq N/2$,  $1 \ll N$ and $\alpha \approx 2/3$ is a constant. The leading order correction to this seems to be $1/(N + 1)$ for the case of $Q = N/2$, and Fig.~ \ref{fig:vncomparion} gives the numerical evidence supporting this claim.  
Numerically computed average von Neumann entropy and the approximate formula given in Eq.~(\ref{eq:avgvNPS}) for a $Q$-qubit subsystem of an $N$-qubit permutation symmetric system match quite well.

It is of interest to see how close is the entropy for permutation symmetric states to that of random states in a $(Q+1)\times(N-Q+1)$ dimensional space, namely the $W,Q+1$ ensemble. Using the large $N$ approximation again with $N = 2Q$ yields, using a formula first conjectured by Page \cite{page1993average} and proved thereafter \cite{sen1996average, foong1994proof, sanchez1995simple}
\beq
\label{eqn:avgvNHSQp1}
\br S^{vN}_Q \kt_{W,Q+1} \approx \log(Q+1)-\f{1}{2\ln 2} \approx \log(Q+1)-0.721.
\eeq
Comparing with Eq.~(\ref{eq:avgvNPS}) implies that $\br S^{vN}_Q \kt_{W,Q+1}$ is marginally smaller than that the random permutation symmetric case, whose average is $\approx 0.66$ smaller that the maximum entanglement. This is seen in Fig.~ \ref{fig:vncomparion} as the systematic lower entropy for the $W, Q+1$ case, which is consistent with the lower linear entropy. 

In contrast for random states of $N$ qubits without any symmetry, the $W, 2^Q$ case,
\beq
\label{eq:avgvNHS}
\br S^{vN}_Q \kt_{W,2^Q} \approx Q-\f{1}{\ln 2} \f{2^{Q}}{2^{N-Q+1}}.
\eeq
Thus permutation symmetric states have marginal entanglement that scales with system size in a logarithmic manner. This has been known for some time using different approaches and in Dicke states \cite{popkov2005logarithmic}, but in the context of random states this has not been studied before. It is interesting to note that the asymmetric random states follow a ``volume" law while permutation symmetric states are marginal and increase logarithmically with the subsystem size, similar to critical spin chains and integrable CFTs \cite{Vidal2003, latorre2005entanglement, popkov2005entangling, popkov2005logarithmic, barthel2006entanglement, vidal2007entanglement}, and subsequently this affects the behaviour of mutual information \cite{wilms2012finite}. This has implications for the sign of the tripartite mutual information as we now discuss.

\begin{figure}
    \includegraphics[width=0.5\textwidth]{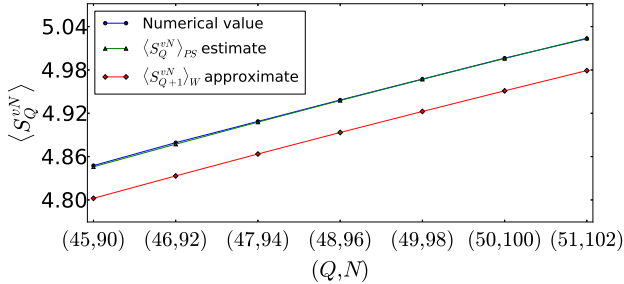}
    \caption{(Color online) A comparison between numerically computed average von Neumann entropy and the approximate formula given in Eq.~(\ref{eq:avgvNPS}) (along with $1/(N + 1)$ correction) for a $Q$-qubit subsystem of an $N$-qubit permutation symmetric system. It can be seen that the approximate formula (green line) matches reasonably well with the numerically computed average (blue line), in contrast with the approximation (Eq. \ref{eqn:avgvNHSQp1}) obtained from Page formula (red line). The numerical values have been obtained by averaging over $10000$ random permutation symmetric states.}
    \label{fig:vncomparion}

\end{figure}

\subsection{Tripartite mutual information}
With the above, it is possible to estimate the behaviour of TMI in random permutation symmetric states as well as the Wishart ensemble
of $2^N$ qubits. Using the definition of the TMI in Eq.~(\ref{eqn:tmi_from_mi}) it follows that 
\beq
\label{eq:avTMI-PS}
\begin{split}
&\left \br  I_3  (Q,Q,Q)\right \kt_{PS} \approx 3 \log(Q+1)-3\log(2Q+1)\\&+\log(3Q+1)=\log\left[ \frac{(3Q+1)(Q+1)^3}{(2Q+1)^3} \right]>0,
\end{split}
\eeq
where we have simply used $S_A \approx \log(Q+1)$ and ignored corrections. This is sufficient to show that the TMI average value for random permutation symmetric states is positive. The above will be a good approximation only for large $Q$ and $N$.

For the case of the ensemble of all random states on the full $2^N$ dimensional space, using the leading order term in Eq.~(\ref{eq:avgvNHS}) gives for the TMI $3Q - 6Q + 3Q = 0$ and therefore it is necessary to use the deviation from maximally entangled states in that formula and this yields
\beq
\label{eq:avTMI-HS}
\left \br  I_3  (Q,Q,Q)\right \kt_{W, 2^Q} \approx - \frac{2^{2Q-N-1}}{\ln 2}\left(2^{4Q}-3 \, 2^{2Q}+3\right).
\eeq 
This is negative as $x^2 - 3x + 3 \ge 3/4$. This is again valid for large $Q$ and $N$. Therefore typical states are not only entangled but they also have a negative TMI implying that information is distributed in multipartite ways. In contrast for permutation symmetric states, the entanglement is small, being only logarithmic in the number of qubits and hence the TMI is typically positive and the information is stored more in bipartite partitions and is not spread out. We can observe this numerically in Fig.~\ref{fig:tmi_HS_PS_randomstates}, where we compute TMI for subsystems of random states with and without permutation symmetry.

Often it is easier for calculations to use the linear entropy and hence we define, even if a little dubious, a linear entropic TMI by using for $H(\cdot)$ in Eq.~(\ref{eqn:tmi_from_mi}) the linear entropies. Hence using Eq.~(\ref{eq:AvgPurityLinEnt}) we can get exact average  linear entropic TMI, for example
\beq
\label{eq:avglinTMI-PS1}
\left \br  I_3^l  (1,1,1)\right \kt_{PS}=\frac{(N-3)(N^2-N+4)}{4(N-2)N(N-1)} \approx \frac{1}{4}-\frac{1}{4N},
\eeq
and 
\beq
\label{eq:avglinTMI-PSm}
\left \br  I_3^l  (m,m,m)\right \kt_{PS} \approx \frac{6 m^3}{(m+1)(2m+1)(3m+1)}.
\eeq
These demonstrate again that the TMI of permutation symmetric states, now with the linear entropy, is also positive on the average.
However this is not quite useful as it is easy to check that it is also positive for random states that are not symmetric, that is when the reduced density matrix is from the Wishart ensemble. Additivity of the entropy is an important property and as the linear entropy is not additive, it does not distinguish the ensembles. We have verified for example that the R\'enyi entropy, which is additive at any order, does in fact have the capability.

\begin{figure}[!htpb]
    \includegraphics[width=0.5\textwidth]{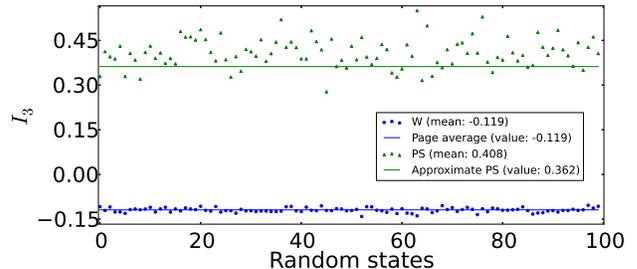}
    \caption{(Color online) TMI ($vN$) between $1$ qubit, $2$ qubit, $2$ qubit subsystem of a $12$ qubit system. The blue points show the TMI for different realizations of asymmetric random states, and it can be seen that the average obtained from Page formula (blue line) matches with the numerical average. The green triangles, on the other hand, show TMI for different realizations of random PS states, and the average TMI obtained from Eq.~(\ref{eq:avgvNPS}) is moderately close to the numerical average.}
    \label{fig:tmi_HS_PS_randomstates}
\end{figure}

In the preceding discussion, we noted that a positive value for average TMI is observed for the PS ensemble. This statement can in fact be extended to cover individual (random) states by the use of L\'evy's lemma. In Appendix \ref{apdx:permsymmlevy}, we show how to apply L\'evy's lemma to permutation symmetric systems so that the average of the quantities we are interested in is taken only over random PS states and not asymmetric states. Using this, we can easily see that for a large number of qubits in the total system, the TMI (based on either von Neumann or linear entropy) for most random PS states is nearly equal to the average TMI. Thus if the average TMI is positive, we can expect most states to have a positive TMI as well, given that we are working in large enough dimensions.

Now, from Eq.~(\ref{eqn:tmi_from_mi}), we can see that a positive TMI implies that $I_2(A: B) + I_2(A: C) > I_2(A: BC)$, which would mean that the (bipartite) mutual information is not monogamous. Monogamy of an information measure refers to the ability of a party to share correlations, as defined by this measure, with other parties. Our results, therefore, suggest connections between mutual information in permutation symmetric systems and monogamy. More precisely, if we start with a pure permutation symmetric state $\ket{\psi_{ABCD}}$ corresponding to some partition $ABCD$, the mutual information that $A$ shares individually with the subsystems $B$ and $C$ is greater than what it shares with the joint system $BC$. Thus, the subsystems can share correlations with other parties. Since the mutual information is a measure of total correlations, the correlations that can be shared may either be classical or quantum.
 
\section{Mutual informations and entanglement in the kicked top}

We now turn to the details of the kicked top propagator, evolve states using this and explore how in the chaotic regime the 
random state values of the previous section holds. In the process we study the short time evolution and growth of 
several interesting measures, the mutual information, the tripartite mutual information, the entanglement and the 
out-of-time-ordered correlator.

\subsection{The kicked top as a many-body spin model}
The kicked top consists of a single large rotor whose quantum evolution over one time period $\tau$ of the kicking is given by the propagator \cite{Haake}
\begin{equation}
    U = \exp\left(-i \frac{k}{2j} J_z^2\right) \exp\left(-i p J_y\right) \label{report1eqn:kicedtopunitary}
\end{equation}
Since $[\bm{J}^2, J_i] = 0$, we are restricted to a $(2j + 1)$-dimensional Hilbert space, and we can use the standard angular momentum basis $\{\ket{j, m} | -j \leq m \leq j\}$, which are the simultaneous eigenstates of $\bm{J}^2$ and $J_z$. Here $k$ and $p$ are parameters. The classical map \cite{Haake, kus1987symmetry, kus88, Zyczkowski90} is from the surface of the sphere $(J_x^2 + J_y^2 + J_z^2)/j^2 = 1$ into itself. It can have regular as well as chaotic dynamics, and this behaviour is controlled by $k$. We use $p = \pi/2$ below, for which $k = 0$ is integrable, being simply a rotation about the $y$ - axis, while around $k = 3$ the phase space is a mixed one, with a measure of chaotic and regular trajectories, while for $k = 6$ is almost fully chaotic.

The mapping between the kicked top model and the dynamics of qubits allows us to study the kicked top as a many-body system \cite{Wang2004, mgtg15, Neill16, Madhok2018_corr}. Since $\bm{J^2}$ is conserved, the state of the system can by mapped to a $2j$-qubit (or equivalently, spin-half) system, with the additional constraint that the qubits are always permutation symmetric. The existence of such a mapping can be understood from the fact that permutation symmetry effectively reduces the dimension of the $2j$-qubit system from $2^{2j}$ to $2j + 1$, and thus a linear isomorphism exists from the angular momentum $j$ system to the permutation symmetric $2j$-qubit system. The ``natural" basis states for the permutation symmetric $2j$-qubit system are therefore the Dicke states.
To be explicit, replacing $J^{x,y,z}$ with $\sum_{l=1}^{2j} \sigma^{x,y,z}/2$, the unitary or Floquet operator of the resultant $2j$ spin system is 
\beq
U = \exp\left(-i \frac{k}{8j}  \sum_{ l\ne l'=1}^{2j} \sigma^z_{l} \sigma^z_{l'}\right)
      \exp\left( -i \frac{\pi}{4} \sum_{l=1}^{2j}\sigma^y_l \right),
\eeq
where the $\sigma^{x,y,z}_l$ are the standard Pauli matrices, and an overall phase is neglected. 

Thus the spin model can be regarded as a kicked long-range transverse field Ising model which has identical interactions between all pairs of spins. While the nearest neighbor transverse field Ising model, kicked or otherwise, is integrable, (for example see \cite{prosen2000exact}) the long-range model can be non-integrable in the thermodynamic limit, which is also the classical limit $j \rarrow \infty$. Floquet models of many-body spin systems are being actively explored in the literature from many perspectives including many-body localization. Thus the kicked top in this many-body avatar presents an opportunity to study entanglement sharing and other typical questions that have been addressed thus far using many other spin models \cite{VedralRMPmanybody}, besides giving us a simple, if potentially chaotic, thermodynamic limit.

If the state vector of the system after $n$ kicks is $\ket{\psi_n}$, the state after $(n+1)$th kick is given by $\ket{\psi_{n+1}} = U \ket{\psi_n}$. In order to study the quantum-classical correspondence, one typically takes spin coherent states \cite{Glauber, Puri} as initial states. These states are parameterized by $\theta$ and $\phi$, and are minimum uncertainty states and therefore closest analogs to points on the classical phase space, aiding a quantum-classical comparison. Noting that $\ket{j, j}$ has minimum $\bm{J}^2$ uncertainty (the variance of $\bm{J}^2$ goes as $j$), the rest of the minimum uncertainty states are generated by rotating this state using $R(\theta, \phi) = \exp{\left(i \theta (J_x \sin(\phi) - J_y \cos(\phi))\right)}$, i.e., $\ket{\theta, \phi} = R(\theta, \phi) \ket{j, j}$. As a state of qubits this is simply the tensor product of qubit states whose Bloch sphere position is uniformly $(\theta, \phi)$: $|\theta,\phi\kt =\otimes^{2j} (\cos(\theta/2) |0 \kt +e^{i \phi} \sin(\theta/2) |1\kt)$.

Starting from such an initial spin coherent state, the time evolved state $U^n|\theta,\phi\kt$ resides in the permutationally invariant subspace of the complete Hilbert space of the $2j$ qubits. If the classical kicked top is in a completely chaotic regime, the quantum one generates pseudo-random states for sufficiently large time $n$ \cite{Haake}. Viewed as a multi-qubit state, it is in the permutation symmetric subspace and therefore we can expect in this case that the results of the previous section on random permutation symmetric states hold good.

There are two aspects of studying the behaviour of TMI and bipartite correlations like the mutual information and entanglement in permutation symmetric states like the ones generated when the dynamics is governed by the kicked top Hamiltonian. One is the investigation of their temporal behaviour that governs their growth and the other is exploring long-time averages or saturation values that would correspond to ensemble averages such as those calculated from the random permutation symmetric states.

\subsection{Growth of information measures and entanglement}

It is of particular interest to probe the dynamical behaviour of these correlations as a function of the initial coherent state localized in the regular or chaotic regions of the kicked top phase space. Unlike initial states localized in regular regions, the long-time evolution of the state under global chaos can be expected to thermalize or equilibrate, and this leads us to the study of ensemble averages. To the best of our knowledge, a study of dynamics of these quantum correlations across arbitrary subsystem decompositions is unexplored and earlier works have largely focused on subsystems consisting of a single or at most two qubits of a multi-qubit kicked top \cite{Wang2004, madhok2015signatures}. Moreover, the dynamical behaviour of quantities like entanglement and discord were compared rather qualitatively in the chaotic and regular regimes. We find that the Ehrenfest time plays a crucial role for saturation of many of the measures.
 
\begin{figure}[!htpb]
    \includegraphics[width=0.49\textwidth]{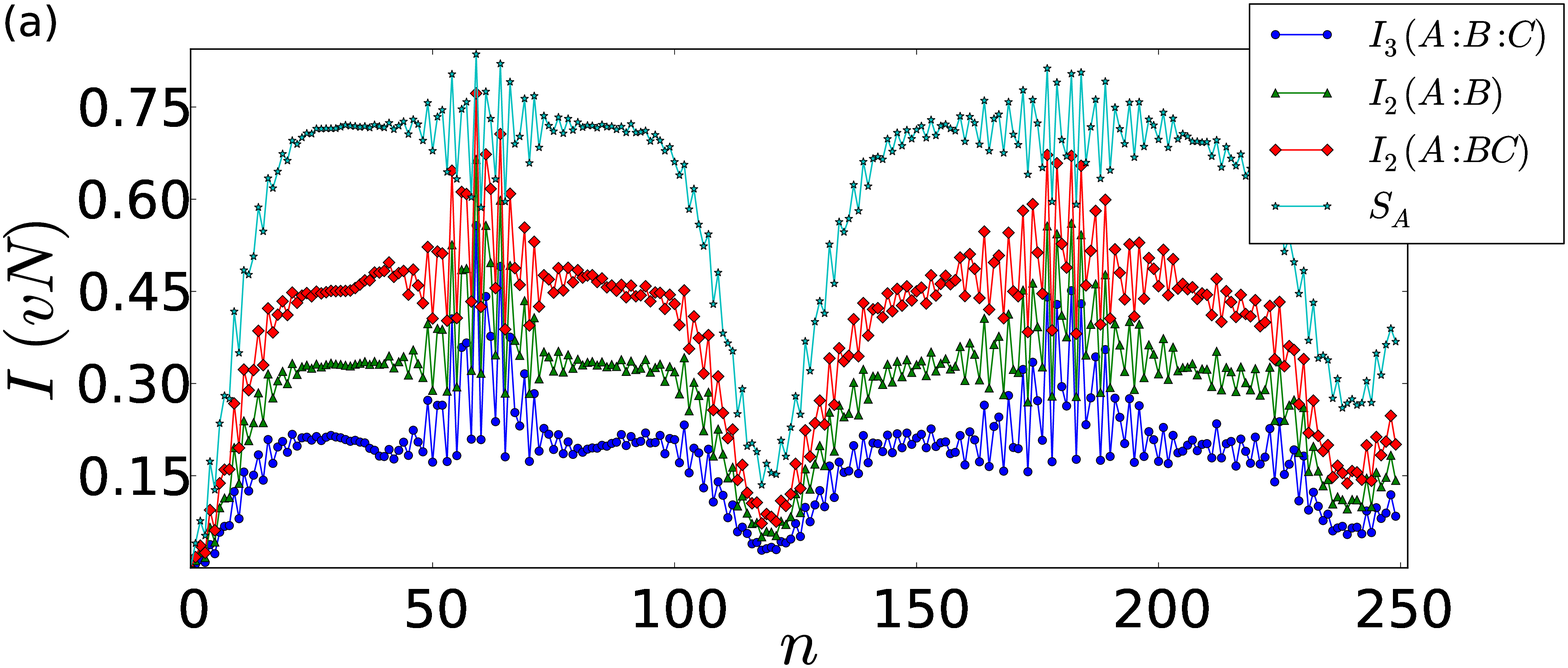}
    \\
    \includegraphics[width=0.49\textwidth]{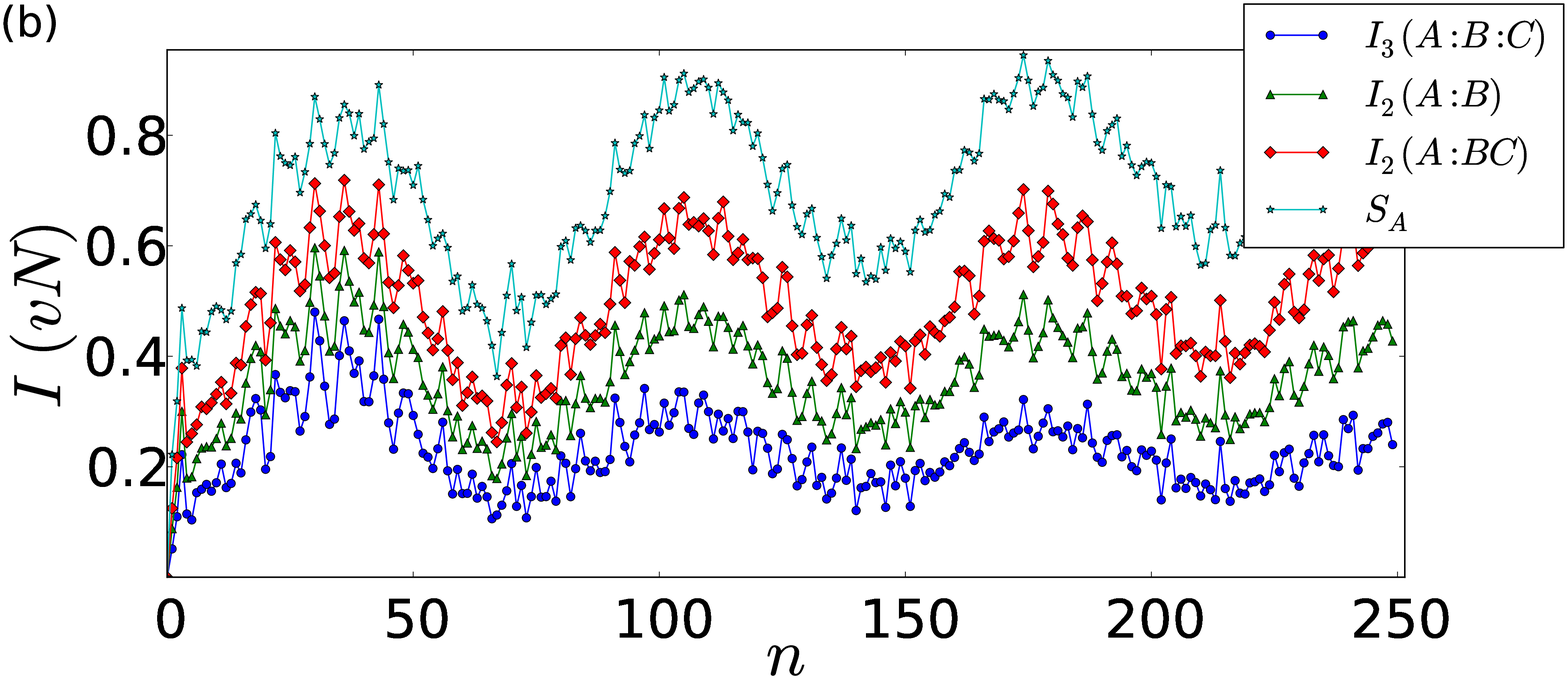}
    \\
    \includegraphics[width=0.49\textwidth]{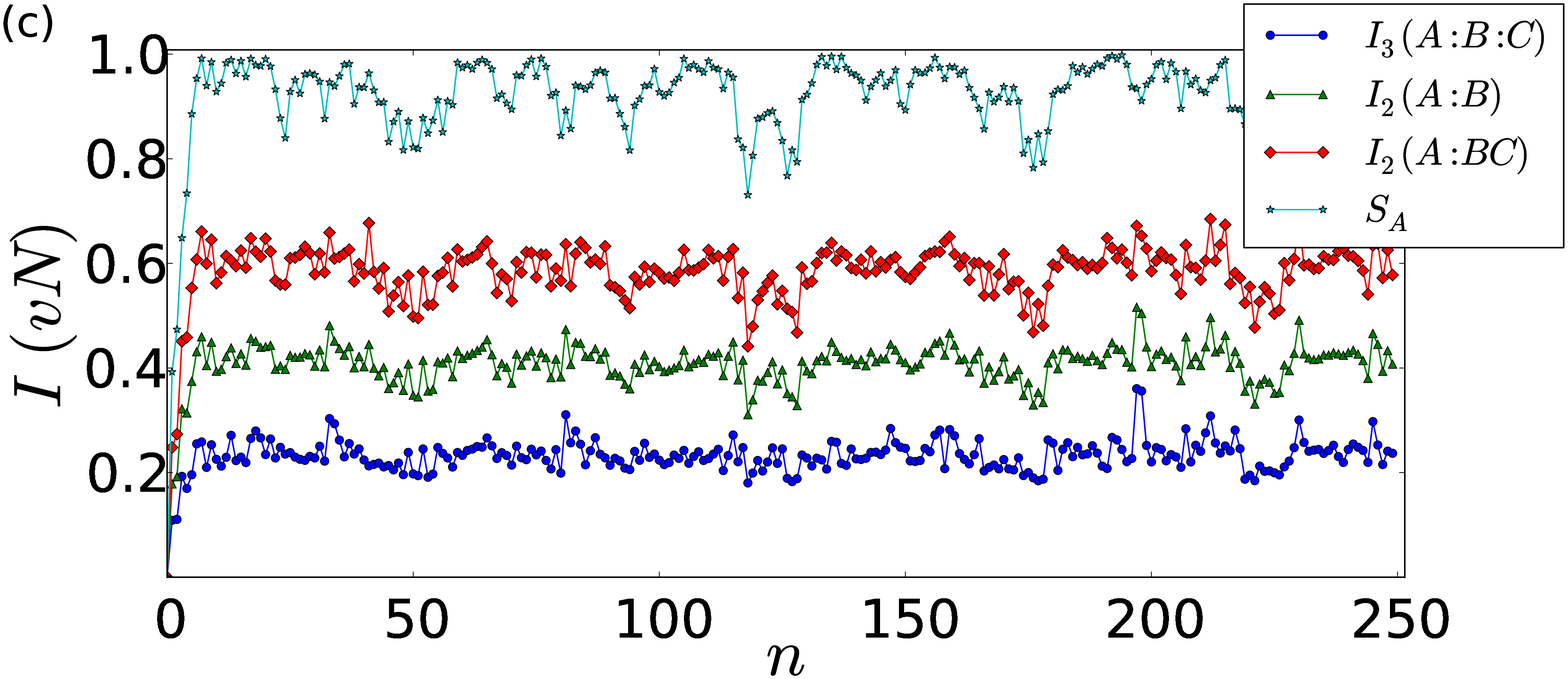}
    \\
    \includegraphics[width=0.49\textwidth]{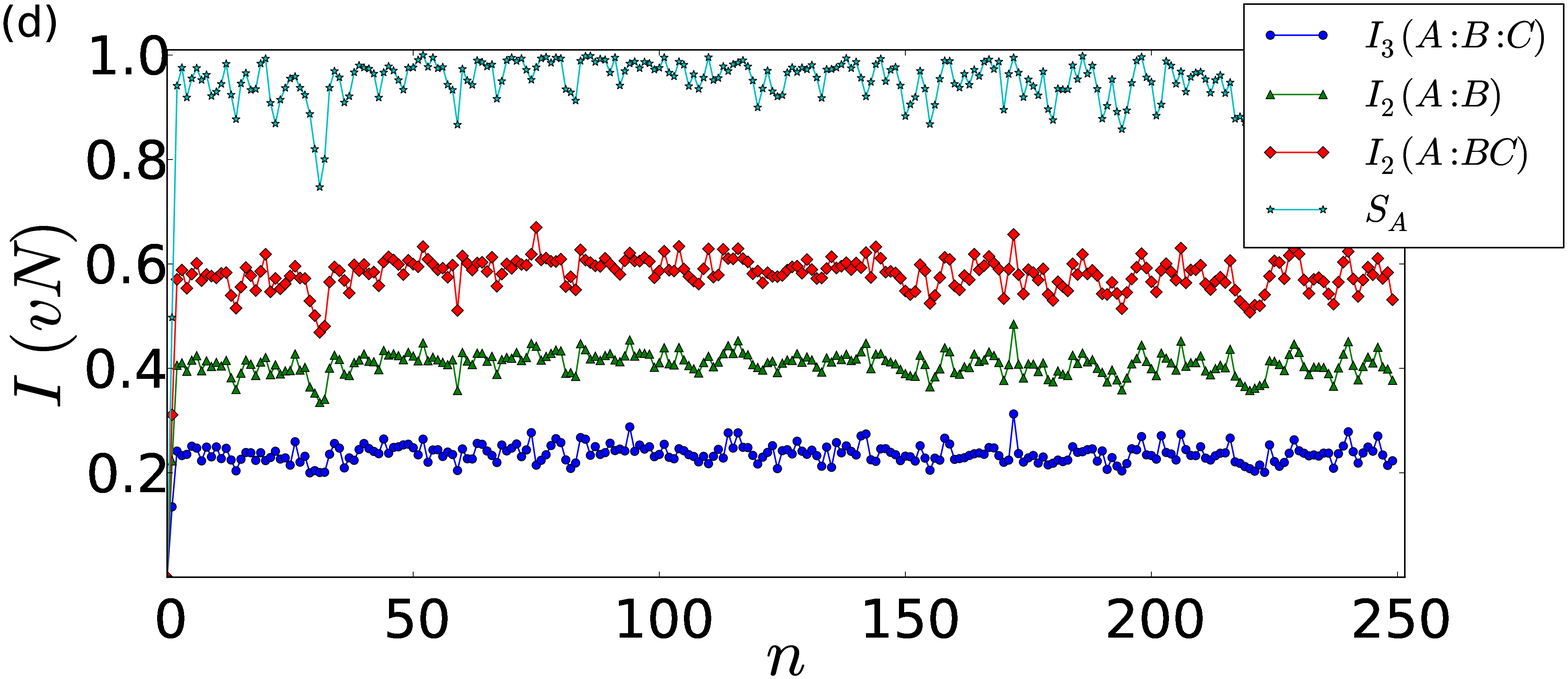}
    \caption{(Color online) TMI, MI and entanglement (computed using von Neumann entropy) between three $1$ qubit subsystems for $j = 10$ (i.e., a $20$ qubit system) in the kicked top. The TMI has been obtained in (a) the regular region with $k = 1$, $\phi = 0.63$, (b) a point in the regular island of the mixed phase space for $k = 3$, $\phi = 0.63$, (c) a point in the chaotic sea of the mixed phase space for $k = 3$, $\phi = 2$, and (d) the globally chaotic region with $k = 6$, $\phi = 0.63$ (in each case, $p = \pi/2$ and $\theta = 2.25$). It can be seen that the TMI reflects the regular nature of the underlying classical dynamics.}
    \label{fig:tmi_with_time}
\end{figure}

In accordance with this idea, we plot the behaviour of TMI with time in the kicked top in regular, mixed and chaotic regimes (see figure \ref{fig:tmi_with_time}). One would notice right away that the extent of regularity of the classical dynamics bears its signatures on the TMI. 
Regular dynamics leads to oscillatory TMI with large time variations and large values. In particular $I_3(1,1,1)$ is shown for an initial coherent state in a 20 qubit system. When $k=1$ the dynamics is regular and we see large positive values of the TMI indicating that information is shared more in a bipartite manner than collectively. When $k=3$ the dynamics is that of a mixed phase space and starting from a regular region leads to large oscillations of the TMI that seems to be damping over very long time scales. At the same value of $k$, starting from an initial state localized in a region in which the  classical limit is chaotic, leads to a rapid growth and saturation of the TMI with small oscillations around $0.234$. Comparing this to the corresponding average TMI ($vN$) for random permutation symmetric states, which is  $\approx 0.245$, corroborates our expectation that chaos in the kicked top thermalizes the initially coherent state, making it similar to a random PS state. 

A similar situation arises when $k = 6$ when the classical map is essentially globally chaotic, only the rise is even faster and the fluctuations smaller. It is this growth rate in a regime of global chaos that is examined further. Figure \ref{fig:tmi_initial_time_growth} compares the growth of TMI with that of mutual information and entanglement computed using von Neumann entropy as well as linear entropy for $j = 750$ (i.e, $1500$ qubit system), with each of the subsystems containing $100$ qubits each. To put the growth of TMI seen in this figure in perspective, we estimate the Ehrenfest time which is given as 
\begin{equation}
    t_{\text{Ehrf}} \sim \ln\left(h_{\text{eff}}^{-1}\right)/ \lambda_{cl} =\ln(2j+1)/\lambda_{cl},
\end{equation}
where the effective Planck's constant $h_{\text{eff}}$ goes as inverse dimension and $\lambda_{cl}$ is the classical Lyapunov exponent of the underlying system. 

\begin{figure}[!htpb]
    \includegraphics[width=0.47\textwidth]{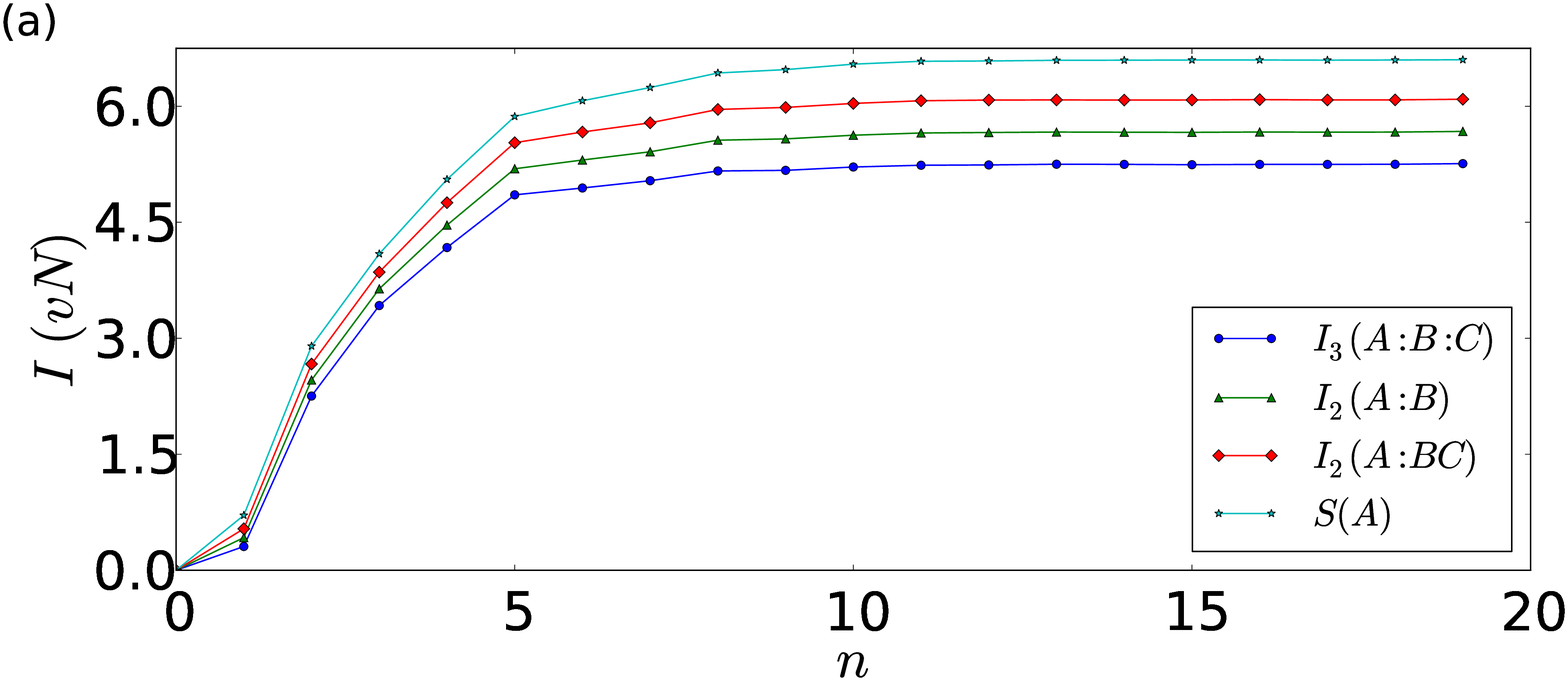}
    \\
    \includegraphics[width=0.47\textwidth]{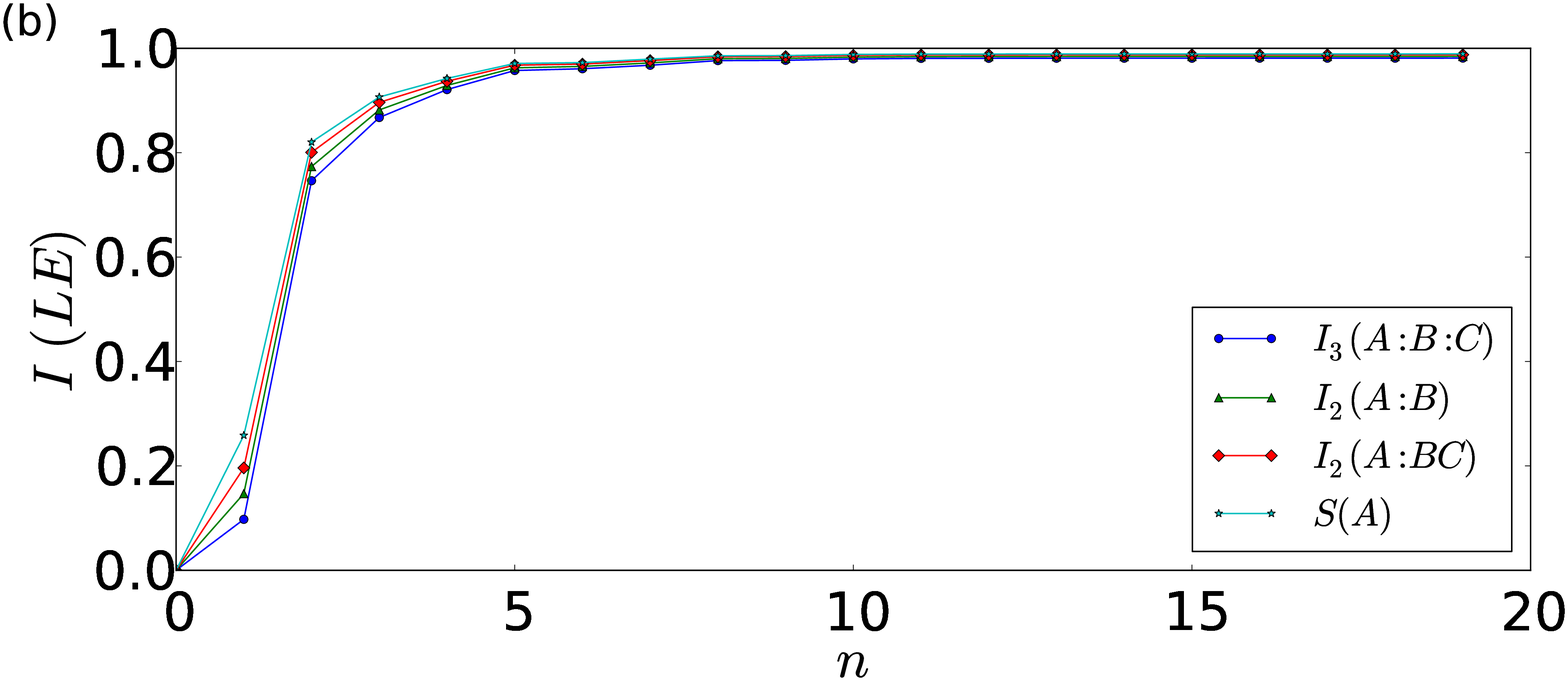}
    
    \caption{(Color online) TMI, MI and entanglement between three $100$ qubit subsystems for $j = 750$ (i.e., a $1500$ qubit system) in the kicked top using (a) von Neumann entropy and (b) linear entropy. It can be seen that the bipartite mutual information and entanglement qualitatively behave like the TMI.}
    \label{fig:tmi_initial_time_growth}
\end{figure}

The Lyapunov exponent for the classical kicked top map for $k = 6$ is approximately $0.97$, and for $j = 750$, we get an Ehrenfest time of approximately $7.54$ time steps. It is evident from figure \ref{fig:tmi_initial_time_growth} that the mutual information and TMI grow over the Ehrenfest time. 
We also see that behaviour of the bipartite mutual informations $I_2(A: B)$ and $I_2(A: BC)$, entanglement and the TMI are qualitatively alike. While we have shown this for subsystems containing equal number of qubits, this seems to be true even otherwise.

This implies, therefore, that the information between the subsystems $A$ and $B$ (or between $A$ and the joint subsystem $BC$) is similar to the information shared jointly by $A$, $B$ and $C$. Also, the entanglement grows in the same way. One possible reason for such a behaviour could be that the total state of the kicked top grows during the Ehrenfest time due to the effect of underlying classical dynamics, and this total state is what is mapped to a PS system of qubits. This growth, therefore, is reflected in the reduced density matrices obtained from this PS state and the measures computed from these. For such time varying scenarios in PS systems, one therefore needs to re-evaluate the significance of the TMI, as the information provided by TMI is also given by mutual information or even entanglement.

The TMI is defined with the entropies being the additive von Neumann entropy. It appears that the TMI (or mutual information) approaches the saturation value with an exponential rate. We note this by plotting the $\log$ behaviour of the TMI with the saturation value subtracted (not shown here). However, a more careful analysis is required for ascertaining the behaviour of TMI during the growth, and for quantifying the corresponding growth rates.

\subsection{Out-of-time-ordered correlators}

The scrambling of localized information in many-body systems is being studied and out-of-time-ordered correlators are used to characterize these \cite{Maldacena2016, shenker2014black, hayden2007black, hosur2016chaos}. It is convenient to define these via growth of commutators. As we wish to remain in the permutation symmetric subspace, the operators we choose are also permutation invariant and not local.
However, it is interesting to study this as it is a simple model of quantum chaos with a well-understood classical limit. For example this was studied in the kicked rotor in \cite{Rozenbaum17}, and there also has been a proposal to investigate scrambling experimentally using the kicked top \cite{swingle2016measuring}.

While postponing a detailed discussion to another work, we wish to contrast and compare this growth with that of the measures discussed above. Define, 
\beq
\label{eq:OTOC}
\begin{split}
&F(n)=-\mbox{Tr}[J_x(0), J_x(n)]^2=2\left( C_2(n)-C_4(n) \right), \\
&C_2(n)= \mbox{Tr}(J^2_x(n)J^2_x(0)),\\
&C_4(n)= \mbox{Tr}(J_x(n)J_x(0)J_x(n)J_x(0)),
\end{split}
\eeq
where $J_x(n)=U^{-n}J_x(0) U^n$ and $C_2(n)$ is a two-point correlator while $C_4(n)$ is the 4-point OTOC and its decrease with time essentially contributes to the growth of the commutator. The $C_2(n)$ behaviour is one of fast relaxation within a ``diffusion time" \cite{Maldacena2016}.
In figure \ref{fig:OTOC_initial_time_growth} we plot these three quantities (usual plot and semi-log plot) for $j = 750$, $k = 6$, $p = \pi/2$ and see exponential growth in $F(t)$. It is interesting that there are oscillations in $C_2(n)$ and $C_4(n)$ initially that compensate and lead to an exponential increase in the commutator.

The rate of growth of $F(t)$ seems to be slightly higher ($\approx 2.5$) than the estimate from simply replacing the commutator by classical Poisson brackets and estimating their growth rate at twice the Lyapunov exponent ($\approx 1.94$). On comparing with the other measures such as entanglement of a block of qubits, or TMI in Fig.~\ref{fig:tmi_initial_time_growth} we see that the Ehrenfest time is the log-time during which the commutator increases exponentially, and then saturates rather quickly.

Thus while there is an exponential growth of the commutators and hence ``scrambling", it seems to be at variance with the positive TMI observed above. Of course there is chaos, and mixing in the classical limit of the kicked top. The resolution maybe in the fact that there is scrambling in phase-space but not in qubit-space. Initial coherent state, localized in phase-space, spreads out exponentially and scrambles in the sense of becoming nonlocal in phase- space, however when viewed as scrambling within the qubits that comprise the effective spin model there is none. This dichotomy is also a reflection of the strong statistical properties, such as spectral fluctuations and eigenvector statistics,  of the kicked-top when viewed as a map on the sphere and a single large spin, compared to the marginal entanglement present in subsystems when viewed as collections of qubits.

There is however an effect or reflection of the scrambling in the qubit space. Not in the sign of the TMI but in terms of equilibration of the measures such as entanglement and tripartite mutual information to values that are given by random symmetric ensembles. More importantly, for sufficiently large subsystems, these grow over a log-time $ \sim \log(N)$. If there is no chaos we see a much slower growth of these quantities and while more detailed studies are needed, the time-scale seems to be as large as the Heisenberg time $\sim N$, see Fig.~(\ref{fig:tmi_with_time}).

\begin{figure}[!htpb]
    \includegraphics[width=0.48\textwidth]{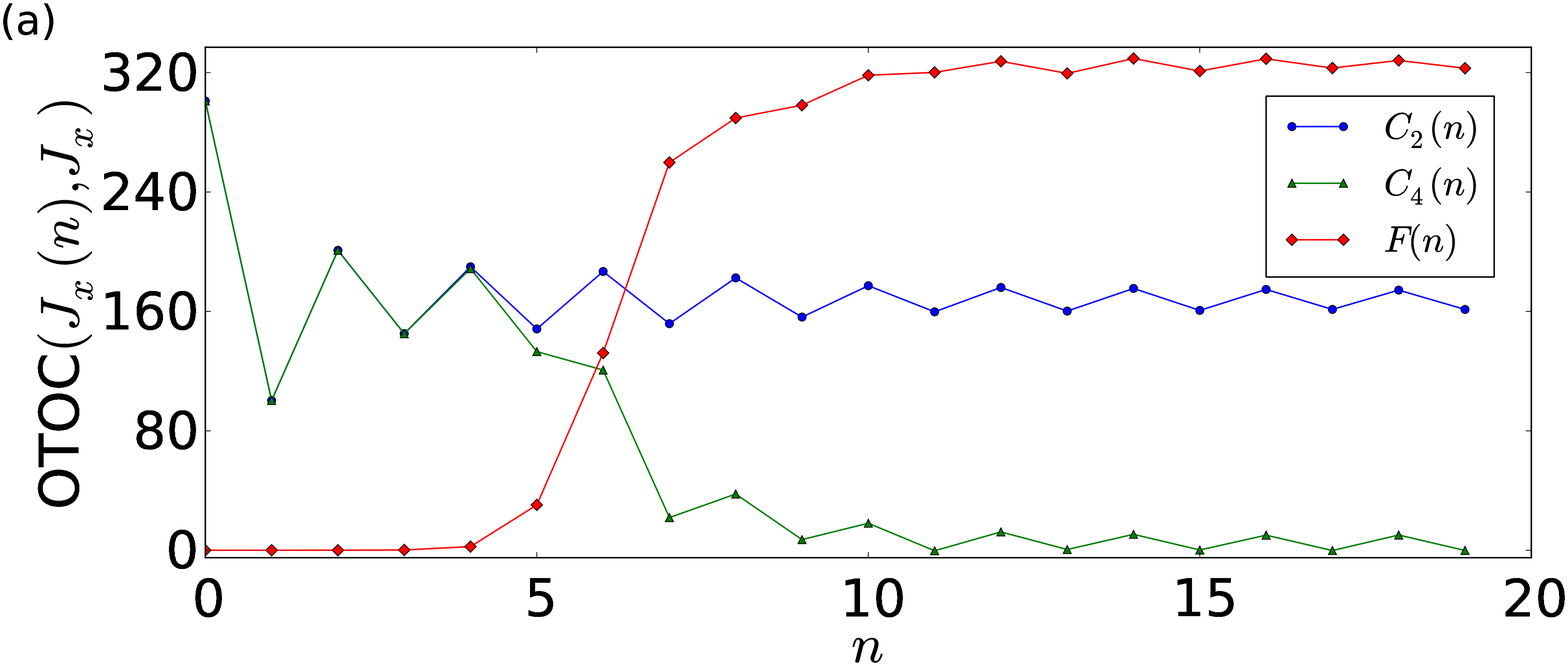}
    \\
    \includegraphics[width=0.48\textwidth]{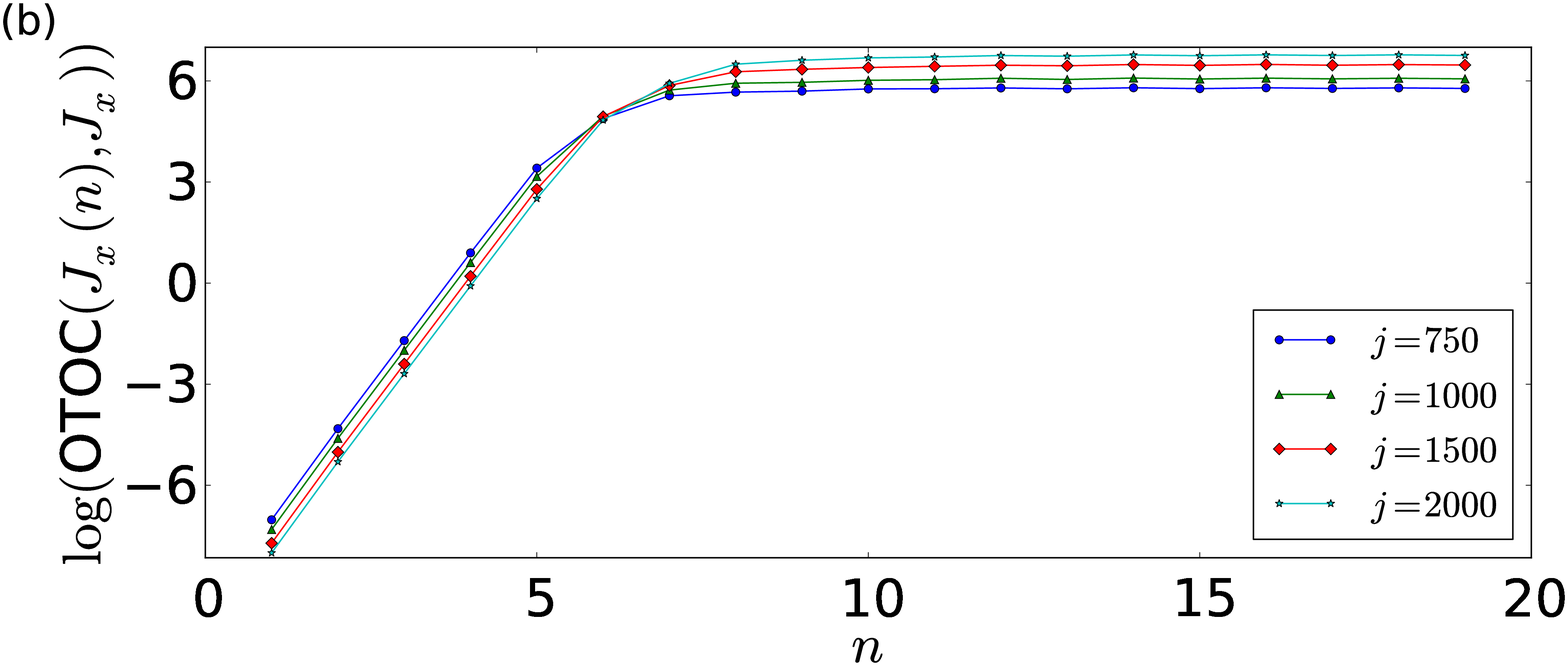}

    \caption{(Color online) (a) $F(n)$, $C_2(n)$ and $C_4(n)$ for $j = 750$, $k = 6$, $p = \pi/2$. The OTOC value has been divided by $j^4$ to normalize (or equivalently, the operators are being normalized by $j$). (b) Semi-log plot of OTOC for $j = 750, 1000, 1500, 2000$ and $k = 6$.}
    \label{fig:OTOC_initial_time_growth}
\end{figure}

\subsection{Saturation and long-time averages}
In addition to the dynamical behaviour, we also study the time-averaged TMI in the kicked top. We have already noted that the signature of classical dynamics is present in the time-varying TMI. Figure \ref{fig:timeaveraged_tmi} shows that such a behaviour is also present when we take time averages of TMI. In particular, comparing this figure with figure \ref{fig:classicalphaseportrait}, we can see that classical structures leave their mark on the quantum system, with regular islands being visible for integrable regimes, and low-period periodic orbits being present for the chaotic regimes. The latter is also a result of eigenfunction scarring \cite{heller1984bound}.

\begin{figure}[!htpb]
    \includegraphics[width=0.49\textwidth]{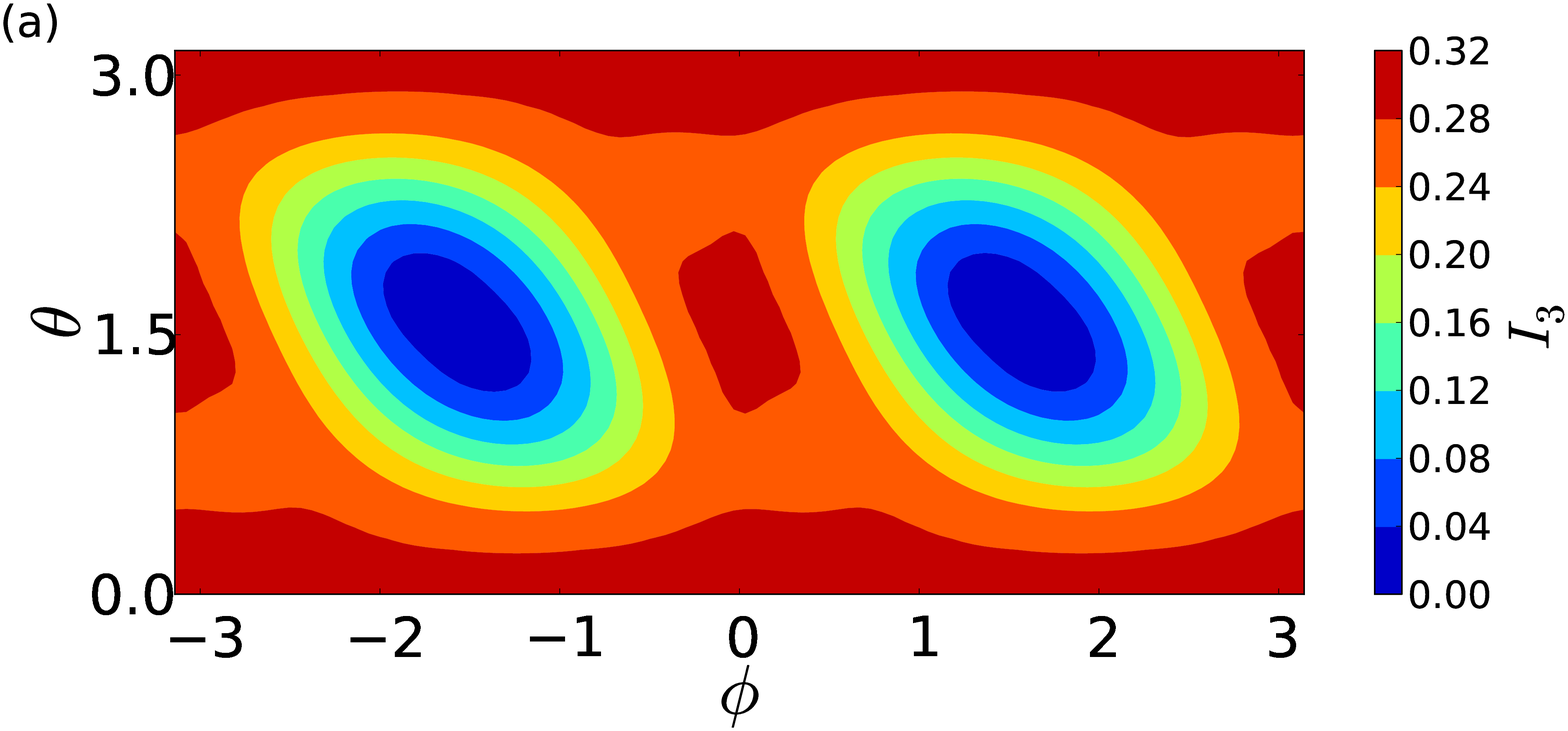}
    \\
    \includegraphics[width=0.49\textwidth]{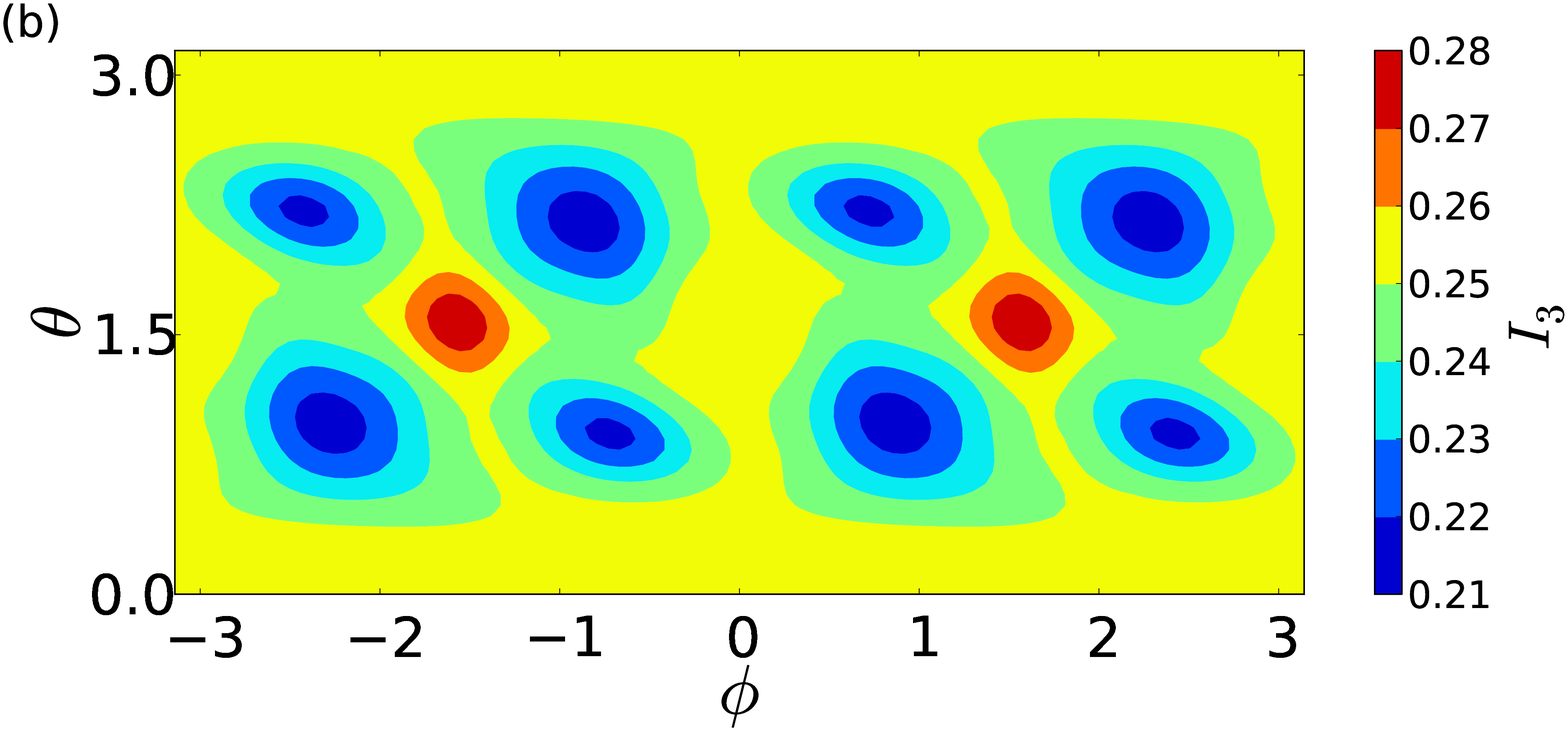}
    \\
    \includegraphics[width=0.49\textwidth]{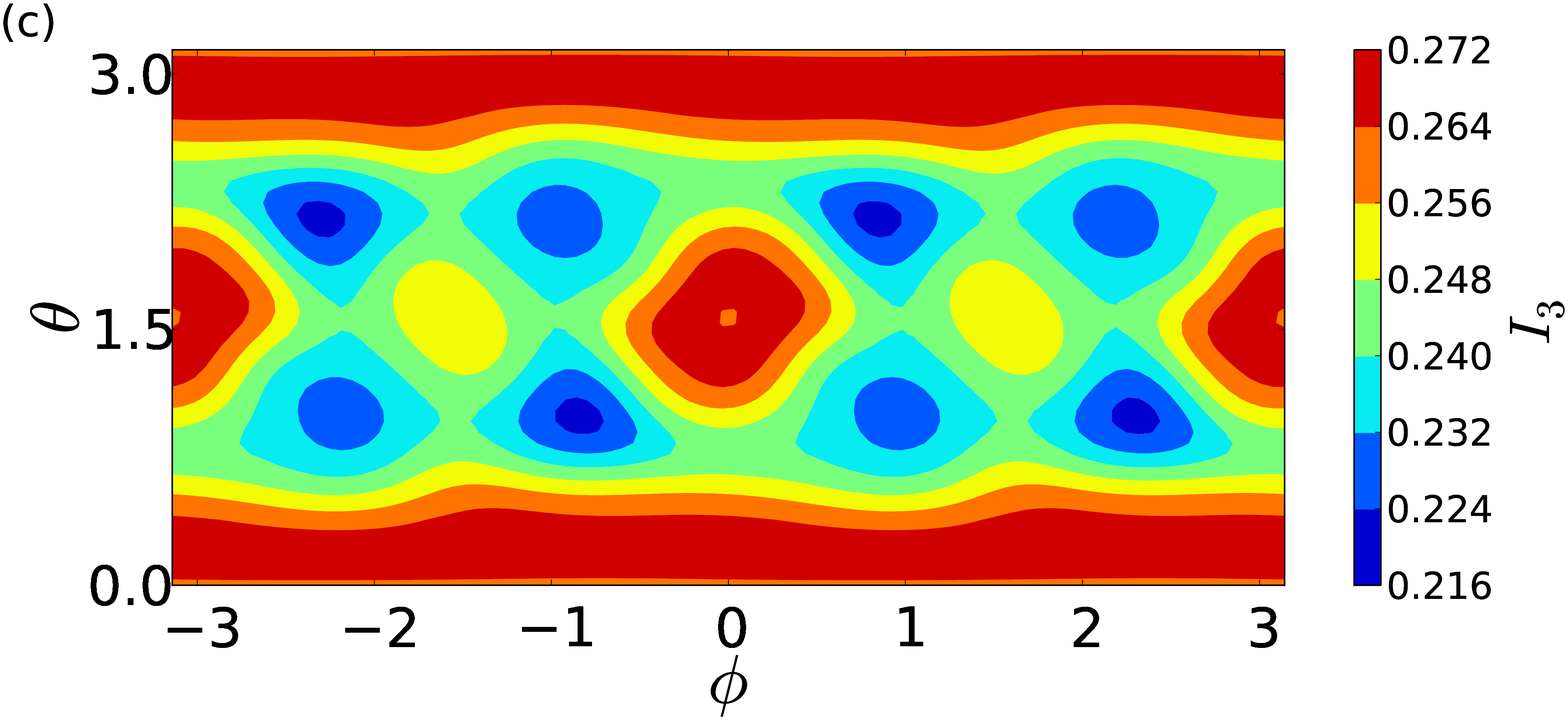}

    \caption{(Color online) Time-averaged TMI between three one qubit subsystems for $j = 6$ ($12$ qubits) and $p = \pi/2$, with the initial state corresponding to a coherent stated associated with a $50 \times 100$ grid discretizing $\theta \in [0,\ \pi)$ and $\phi \in [0,\ 2\pi)$. The time-averaged TMI is obtained for (a) $k = 1$ corresponding to a classically regular region, (b) $k = 3$, a mixed regime and (c) $k = 6$, a globally chaotic region, and is computed by averaging over $1000$ iterations starting from the initial coherent state. It can be that the classical structures are reproduced to a certain extent.}
    \label{fig:timeaveraged_tmi}
\end{figure}

\begin{figure}[!htpb]
    \includegraphics[width=0.49\textwidth]{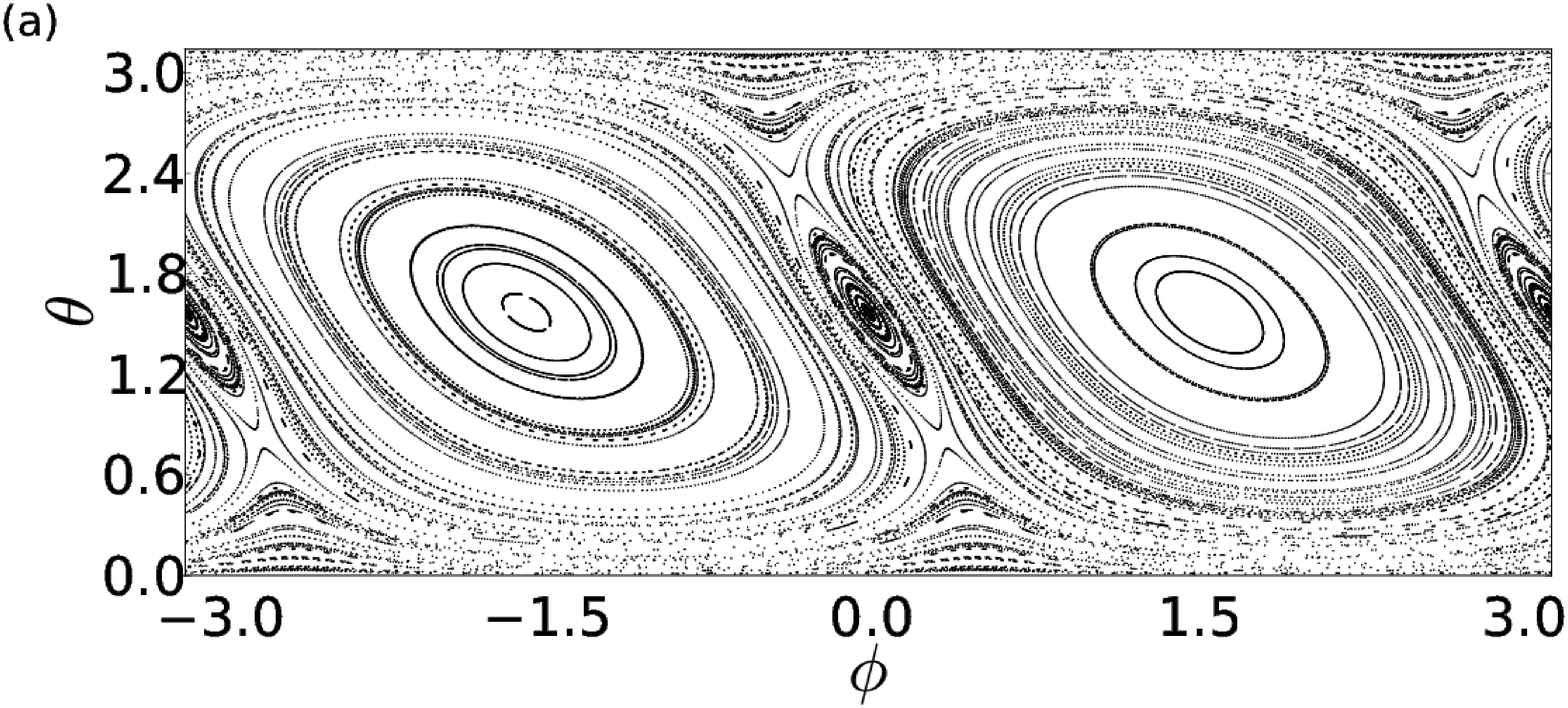}
    \\
    \includegraphics[width=0.49\textwidth]{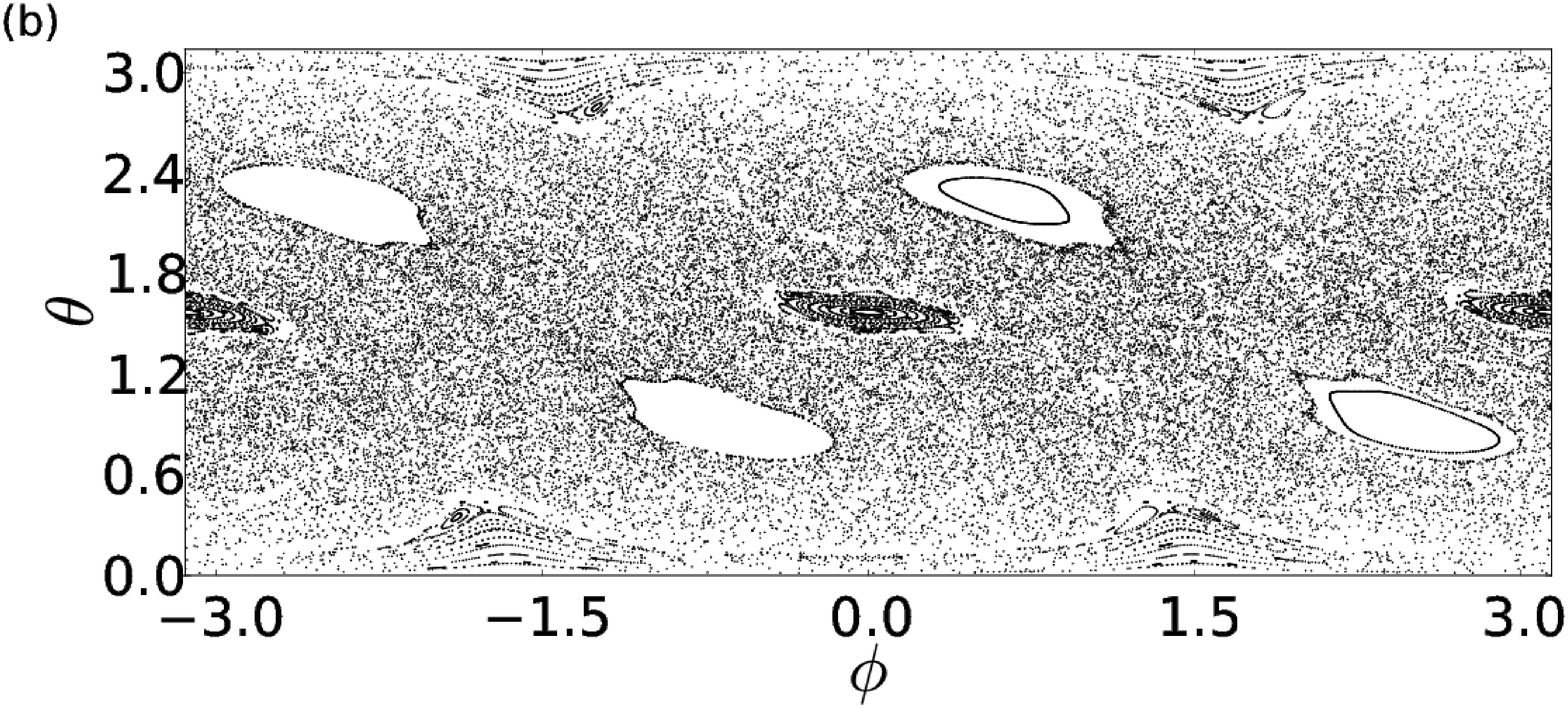}
    \\
    \includegraphics[width=0.49\textwidth]{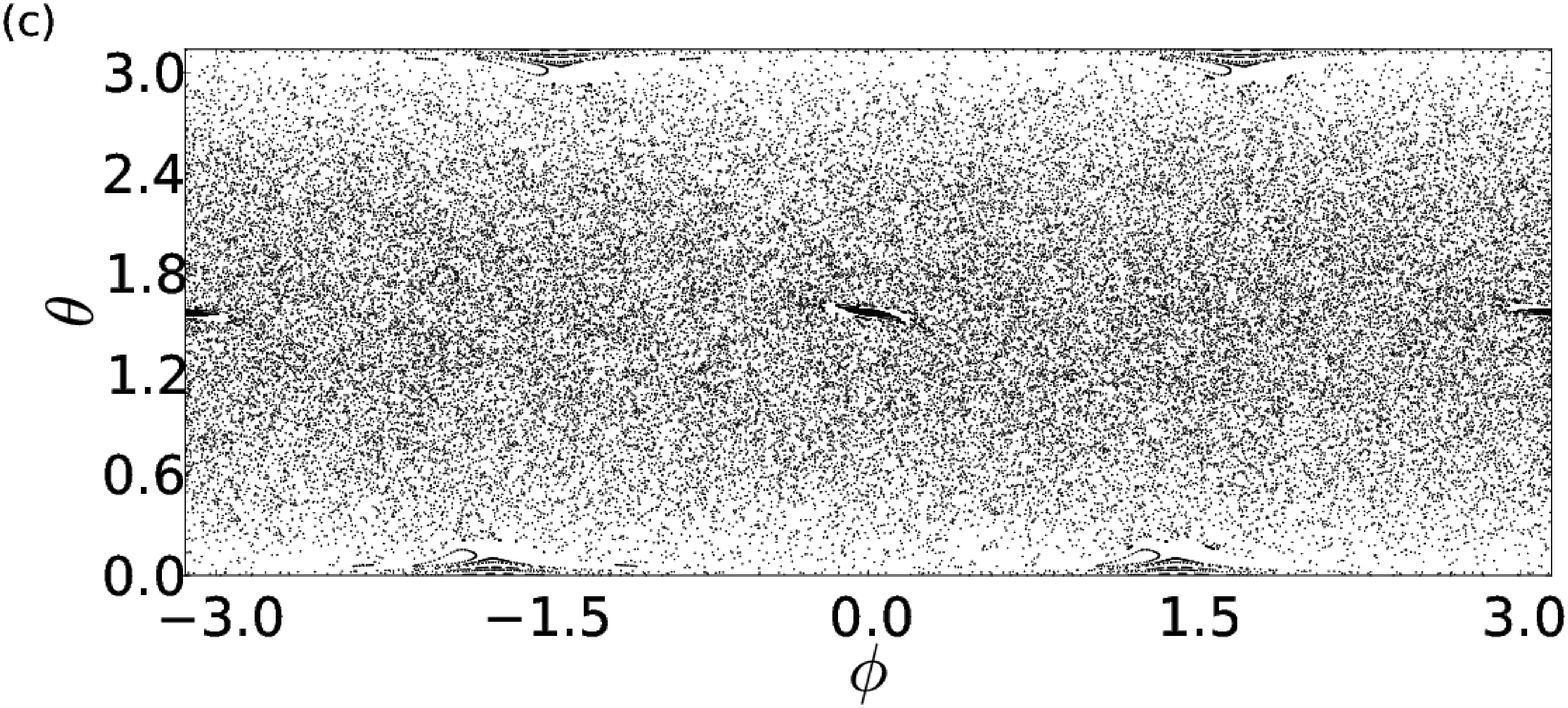}

    \caption{Phase portrait of the classical kicked top map, showing (a) globally regular dynamics for $k = 1$, (b) a mixed dynamics with coexisting regions of chaoticity and regularity for $k = 3$ and (c) (almost) globally chaotic dynamics for $k = 6$. $p = \pi/2$ is used in all the cases.}
    \label{fig:classicalphaseportrait}
\end{figure}

One can note, in particular, that the average TMI for a large number of points in the phase space is positive. This is true not only in the globally chaotic region, but also in the mixed and regular regimes. In the globally chaotic case, we can see that the average TMI varies approximately from $0.22$ to $0.27$, while the average TMI obtained from random PS states for the corresponding case is $\approx 0.24$. As before, this indicates that chaotic evolution nearly transforms the initially coherent states to random PS states.

\section{Summary and discussion}

In this paper, multi-qubit permutation symmetric states have been studied in general. An algorithm for getting the reduced density matrix of any number of qubits is given. This lead to the definition of a new random matrix ensemble that is of relevance in such cases, the random permutation symmetric states. It has been shown that the distribution of the  eigenvalues of the reduced density matrices of this ensemble, while similar to the Marchenko-Pastur is also different in crucial features and defines a new universality class.

For such a random ensemble we have derived analytical expressions for the average purity and linear entropy of arbitrary subsystems and approximate expressions for the average von Neumann entropy. In particular random permutation symmetric states have an ``area-law" kind of entanglement scaling only as $\log$(number of qubits in the subsystem). This small multipartite entanglement effectively results in positive tripartite mutual information, for which we derive analytical estimates. In comparison we point out that random states with no symmetry would have a typical TMI value that is negative.

We applied these general statements to a qubit model of the kicked top which can show a range of dynamical behaviour from the regular to the fully chaotic. An interesting aspect that has been recently investigated in certain quantum chaotic systems is the notion of scrambling of information, which refers to the delocalization with time of initially localized information across the system. Both OTOC and TMI have been proposed as a measure to detect scrambling of information, wherein a growth/decay of the OTOC or a negative value of TMI are considered as signatures of scrambling. In our analysis of the kicked top, we observe that OTOCs defined with observables on phase space grow exponentially fast with time (within the Ehrenfest time), while as noted above, permutation symmetry in the system is responsible for most states having a positive value of TMI. Thus the quantum chaos in this case results in scrambling in phase-space while remaining unscrambled in qubit-space.

While Iyoda and Sagawa have also pointed out that TMI for certain initial states are positive \cite{IyodaSagawa}, they argue that the effective dimension of the set of such states is small. If we consider the qubit model of the kicked top, the permutation symmetric subspace that we are restricted to is only $N + 1$ of the total $2^N$ dimensional one, and is therefore consistent with their observations. It will be interesting to explore the model in the non-permutation symmetric subspace when it may not have an interpretation as a kicked top. Nevertheless it is likely that the same parameter regimes lead to strongly chaotic behaviour, negative TMI and exponentially growing OTOC. We are also hopeful that our observations about permutation symmetric states will be of broad applicability.

\textit{Note added}: Along with the present study, another work appeared \cite{pappalardi2018scrambling} which investigates scrambling in long-range spin chains, including the case of kicked top.

%%%%%%%%%%%%%%%% References
\bibliography{permut_references}

%%%%%%%%%%%%%%%% Appendix
\appendix
\section{\label{apdx:permsymmembedding}Subsystems of Permutation Symmetric Systems}
In section II, we saw that the structure of Dicke states can be used to write permutation symmetric vectors of $N$ qubits in terms of tensor product of permutation symmetric vectors containing $Q$ and $N - Q$ qubits ($Q < N$). Equation \ref{eq:permpsi2} holds, however, only when the vectors are viewed as elements of the full $2^N$ dimensional space. We can nevertheless use the idea presented in Eq.~(\ref{eq:permpsi2}) to work with permutation symmetric systems without viewing them as subspace of a larger space. An apparent tradeoff of working exclusively with permutation symmetric spaces is that the tensor product structure would be lost, as an $N + 1$ dimensional space in general cannot be written as a tensor product of $Q + 1$ and $N - Q + 1$ dimensional spaces. Fortunately though, it is still possible to work conveniently with subsystems. We show in the following that we can embed a permutation symmetric vector of $N$ qubits into the tensor product of $Q$ qubit and $N - Q$ qubit permutation symmetric spaces such that inner products are preserved.

\begin{proposition}
    \label{propn:permsymmembedding}
    For $d_X, d_Y, d_Z \in \mathbb{N}$, let $\{\ket{i_X}\ |\ 0 \leq i \leq d_X\}$ be some fixed orthonormal basis of $X$, $\{\ket{j_Y}\ |\ 0 \leq j \leq d_Y\}$ be a fixed orthonormal basis of $Y$ and $\{\ket{k_Z}\ |\ 0 \leq k \leq d_Z\}$ be a fixed orthonormal basis of $Z$, such that $d_X = d_Y + d_Z$. Then the map
    \begin{align*}
        &\iota \colon X \to Y \otimes Z \\
        &\iota\left(\sum_{i = 0}^{d_X} a_i \ket{i_X}\right) = \sum_{j = 0}^{d_Y} \sum_{k = 0}^{d_Z} A_{jk} \ket{j_Y} \ket{k_Z} \\
        &\text{with } A_{jk} = C_{jk} a_{j + k} = \sqrt{\frac{\binom{d_Y}{j} \binom{d_Z}{k}}{\binom{d_X}{j + k}}} a_{j + k}
    \end{align*}
    is linear and injective. Further, the map is isometric \textnormal{(}i.e., preserves inner products\textnormal{)}.
\end{proposition}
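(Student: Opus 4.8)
The plan is to establish the three assertions---linearity, isometry, and injectivity---in that order, with essentially all of the content lying in the isometry step, which I will reduce to the Vandermonde convolution identity for binomial coefficients.

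First I would note that linearity is immediate from the defining formula: the output coordinates $A_{jk}=C_{jk}\,a_{j+k}$ are fixed scalar multiples of the input coordinates, so $\iota$ sends $\sum_{i}(\lambda a_i+\mu b_i)\ket{i_X}$ to the vector with coordinates $C_{jk}(\lambda a_{j+k}+\mu b_{j+k})$, which is $\lambda\,\iota\!\left(\sum_i a_i\ket{i_X}\right)+\mu\,\iota\!\left(\sum_i b_i\ket{i_X}\right)$; well-definedness needs no comment since the $\ket{i_X}$ form a basis.

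The main step is isometry. Writing $\ket{\psi}=\sum_{i=0}^{d_X}a_i\ket{i_X}$ and $\ket{\phi}=\sum_{i=0}^{d_X}b_i\ket{i_X}$ and using that $\{\ket{j_Y}\ket{k_Z}\}$ is orthonormal in $Y\otimes Z$, I would compute
\[
\langle\iota(\psi)\,|\,\iota(\phi)\rangle=\sum_{j=0}^{d_Y}\sum_{k=0}^{d_Z}C_{jk}^{2}\,\overline{a_{j+k}}\,b_{j+k}=\sum_{i=0}^{d_X}\overline{a_i}\,b_i\left(\sum_{\substack{0\le j\le d_Y,\ 0\le k\le d_Z\\ j+k=i}}\frac{\binom{d_Y}{j}\binom{d_Z}{k}}{\binom{d_X}{i}}\right),
\]
where the second equality regroups the double sum according to the value $i=j+k$. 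Thus $\iota$ preserves inner products exactly when the parenthesised coefficient equals $1$ for each $i$ with $0\le i\le d_X$, i.e.\ when $\sum_{j+k=i}\binom{d_Y}{j}\binom{d_Z}{k}=\binom{d_X}{i}=\binom{d_Y+d_Z}{i}$, which is precisely Vandermonde's identity. The constraints $0\le j\le d_Y$, $0\le k\le d_Z$ cause no trouble: outside this range one of the two binomial coefficients vanishes, so the restricted sum coincides with the standard convolution. This yields $\langle\iota(\psi)\,|\,\iota(\phi)\rangle=\langle\psi\,|\,\phi\rangle$, and the special case $\phi=\psi$ gives $\norm{\iota(\psi)}=\norm{\psi}$.

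Injectivity then comes for free: $\iota$ is linear and norm-preserving, so $\iota(\psi)=0$ forces $\norm{\psi}=\norm{\iota(\psi)}=0$, hence $\psi=0$ and $\ker\iota=\{0\}$. I expect the only delicate point---really bookkeeping rather than a genuine obstacle---to be the reindexing of the $(j,k)$-sum by $i=j+k$ and the matching of the resulting coefficient to the Vandermonde sum over the correct index range; with that identity in hand the proposition is immediate.
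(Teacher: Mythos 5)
Your proof is correct and follows essentially the same route as the paper: the heart of both arguments is regrouping the double sum over $(j,k)$ along the cross-diagonals $j+k=i$ and invoking the Vandermonde convolution to see that each coefficient equals $\binom{d_X}{i}/\binom{d_X}{i}=1$. The only (harmless) difference is that you deduce injectivity from isometry of the linear map, whereas the paper checks directly that vanishing of all $A_{jk}$ forces every $a_i=0$; your handling of the index-range issue via vanishing binomial coefficients is equivalent to the paper's explicit case split.
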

\begin{proof}
    That the map is linear is clear. To see that it is injective, note that if $\iota(\ket{a_X}) = 0$ for some $\ket{a_X} \in X$, then each of the coefficients $A_{jk}$ are zero. Thus $a_i = 0$ implying $\ket{a_X} = 0$ (i.e., the kernel of $\iota$ is trivial), establishing the injectivity of $\iota$. Showing isometry of $\iota$ involves a simple computation using Vandermonde Convolution identity.

    Towards this end, say $\ket{a_X} = \sum_i a_i \ket{i_X}$ and $\ket{b_X} = \sum_i b_i \ket{i_X}$, so that $\ip{\ket{a_X}, \ket{b_X}} = \sum_i a_i^* b_i$. On the other hand, $\ip{\iota(\ket{a_X}), \iota(\ket{b_X})} = \sum_j \sum_k A_{jk}^* B_{jk} = \sum_j \sum_k C_{jk}^2 a_{j + k}^* b_{j + k}$. Now the sum over $0 \leq j \leq d_Y$ and $0 \leq k \leq d_Z$ is done along the ``cross-diagonals" $j + k = r$. That is fix an $r$ ($0 \leq r \leq d_X = d_Y + d_Z$) and sum over $\max(0, r - d_Z) \leq j \leq \min(r, d_Y)$, and do this for all $0 \leq r \leq d_X$.
    \begin{align*}
        \ip{\iota(\ket{a_X}), \iota(\ket{b_X})} &= \sum_{r = 0}^{d_X} \frac{a_r^* b_r}{\binom{d_X}{r}}
                                                        \sum_{j = \max(0, r - d_Z)}^{\min(r, d_Y)} \binom{d_Y}{j} \binom{d_Z}{r - j} \\
                                                &= \sum_{r = 0}^{d_Z} \frac{a_r^* b_r}{\binom{d_X}{r}}
                                                        \sum_{j = 0}^{\min(r, d_Y)} \binom{d_Y}{j} \binom{d_Z}{r - j} \\
                                                   &+ \sum_{r = d_Z + 1}^{d_X} \frac{a_r^* b_r}{\binom{d_X}{r}}
                                                            \sum_{j = 0}^{d_X - r} \binom{d_Y}{j} \binom{d_Z}{d_X - r - j} \\
                                                &= \sum_{r = 0}^{d_Z} \frac{a_r^* b_r}{\binom{d_X}{r}} \binom{d_X}{r} +
                                                   \sum_{r = d_Z + 1}^{d_X} \frac{a_r^* b_r}{\binom{d_X}{r}} \binom{d_X}{r} \\
                                                   &\hspace{0.5cm}\text{ (using Vandermonde Convolution) } \\
                                                &= \sum_{r = 0}^{d_X} a_r^* b_r \\
                                                &= \ip{\ket{a_X}, \ket{b_X}}
    \end{align*}
    The above assumes that $d_Y \leq d_Z$, if not, sum over $k$ instead, following the same steps. Thus, $\iota$ is an isometry.
\end{proof}

Hence, given an $N$ qubit permutation symmetric state, we can work as if it has a tensor product decomposition in terms of permutation symmetric states of $Q$ and $N - Q$ qubits.

Furthermore, if we were to embed these vectors in the full $2^N$ dimensional space (by mapping each basis vector to the appropriate Dicke state), then these are actually the same vector. To elaborate on this point, say given a vector space $V$ of dimension $d_V + 1$, we define the linear map $\mathcal{E}_V$ as one that takes some fixed basis vectors of the space $V$ to the corresponding Dicke states in the $2^{d_X}$ dimensional space ($d_V \leq d_X$).
\begin{equation}
    \mathcal{E}_V(\ket{i_V}) = \frac{1}{\sqrt{\binom{d_V}{i}}} \sum_{\substack{0 \leq l \leq 2^{d_X} - 1\\ w(l) = i}} \ket{l},\
                                                                                                                            0 \leq i \leq d_V
\end{equation}
The action on the fixed basis vectors $\ket{i_V}$ can be extended linearly to define $\mathcal{E}_V$ on every vector of $V$ (and such an extension is unique). With such a linear map, we can see that the vectors related by the embedding $\iota$ are actually equal in the full $2^{d_X}$ dimensional space. In other words, we have $\mathcal{E}_X = (\mathcal{E}_Y \otimes \mathcal{E}_Z) \circ \iota$, which is essentially the content of Eq.~(\ref{eq:permpsi2}) mentioned in section II.

Now, since we always work in dimensions linear in the number of qubits, one can study permutation symmetric systems with very large number of qubits. In such a scenario, it may be helpful to obtain the combinatorial coefficients appearing in the expression for reduced density matrix recursively. So if $C_{kl}$ are the combinatorial coefficients as defined in proposition \ref{propn:permsymmembedding}, we have the following recursion relations for the same.
\begin{align}
    C_{k+1,l}   &= \sqrt{\left(\frac{Q - k}{N - k - l}\right)} \sqrt{\left(\frac{k + l + 1}{k + 1}\right)} C_{kl} \\
    C_{k,l+1}   &= \sqrt{\left(\frac{N - Q - l}{N - k - l}\right)} \sqrt{\left(\frac{k + l + 1}{l + 1}\right)} C_{kl} \\
    C_{k+1,l+1} &= \sqrt{\left(\frac{Q - k}{N - k - l - 1}\right)} \sqrt{\left(\frac{k + l + 2}{k + 1}\right)} \\
                    &\hspace{0.8cm}\sqrt{\left(\frac{N - Q - l}{N - k - l}\right)} \sqrt{\left(\frac{k + l + 1}{l + 1}\right)} C_{kl}
\end{align}
Here, in place of $d_X$ we have used $N$ and $Q$ in place of $d_Y$, which denote the number of qubits in the system and the subsystem respectively. These relations can be used to compute $C_{kl}$ when working with a large number of qubits.

\section{\label{apdx:permsymmlevy}L\'evy's Lemma for Permutation Symmetric Systems}
L\'evy's lemma is a statement describing the relation between the values taken by a Lipschitz continuous function defined on a sphere and its average with respect to the uniform measure on the sphere. The normalization requirement of quantum states allows us to consider the state as a point on a unit sphere. The work in \cite{hayden2006aspects} describes the use of L\'evy's lemma in the context of quantum systems. In particular, their study covers the cases of von Neumann and linear entropy for a system of qubits. But if we directly apply their results to a PS system of qubits, the average would correspond to the Wishart ensemble, while we are interested in averages over the PS ensemble. Thus we need to appropriately adapt the relevant proofs presented in Hayden \textit{et al.} \cite{hayden2006aspects} to cover the case of PS system of qubits, so that the averages indeed correspond to those over the PS ensemble.

Our starting point is the property noted in Appendix \ref{apdx:permsymmembedding}, that given any $N$ qubit permutation symmetric vector, it has a tensor product decomposition in terms of $Q$ qubit and $N - Q$ qubit permutation symmetric vectors. So, given any permutation symmetric vector $\ket{a_X} \in X$ of dimension $d_X + 1$ and subsystems $Y$ and $Z$ of dimensions $d_Y + 1$ and $d_Z + 1$ respectively (so that $d_X = d_Y + d_Z$), let the map $G \colon X \to \mathcal{L}(Y)$ take this vector to the reduced density matrix of subsystem $Y$. That is,
\begin{equation}
    G(\ket{a_X}) = \tr_Z \op{\iota(a_X)}{\iota(a_X)}
\end{equation}
where $\iota$ is the embedding defined in Appendix \ref{apdx:permsymmembedding}. We will combine this with the proof of Lipschitz continuity of von Neumann entropy and purity \cite{hayden2006aspects} to adapt them to the permutation symmetric case. In the following discussion, the Lipschitz continuity of functions is considered with respect to the Euclidean norm.
\begin{proposition}
    \begin{enumerate}[i.]
        \item The von Neumann entropy is a Lipschitz continuous function for permutation symmetric states. That is, the function $f\colon X \to \mathbb{R}$ defined as
        \begin{equation*}
            f(\ket{a_X}) = S^{vN}(G(\ket{a_X}))
        \end{equation*}
        is Lipschitz continuous with respect to the Euclidean norm for $d_Y \geq 2$. 
        \item The linear entropy is a Lipschitz continuous function for permutation symmetric states.
        \item The tripartite mutual information between any three subsystems of a permutation symmetric system is Lipschitz continuous, where the TMI is calculated either using von Neumann or linear entropy. For the case of von Neumann entropy, each of the subsystems must have at least two qubits, and so the total system should have at least six qubits.
    \end{enumerate}
\end{proposition}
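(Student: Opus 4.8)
The plan is to reduce everything to the Lipschitz bounds already established by Hayden, Leung and Winter \cite{hayden2006aspects} for the full (non-symmetric) case, exploiting the isometric embedding $\iota$ of Proposition~\ref{propn:permsymmembedding}. The key observation is that $G$ factors as $G = \Phi \circ \iota$, where $\Phi(\ket{\psi}) = \tr_Z \op{\psi}{\psi}$ is exactly the reduced-state map on the ordinary bipartite Hilbert space $Y \otimes Z$ (here $Y$ is the $(d_Y+1)$-dimensional permutation-symmetric space of $Q$ qubits and $Z$ the $(d_Z+1)$-dimensional one of $N-Q$ qubits). Since $\iota$ is linear and isometric it preserves Euclidean distances exactly: $\norm{\iota(x) - \iota(y)} = \norm{x - y}$. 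Hence composing any Lipschitz function on the unit sphere of $Y \otimes Z$ with $\iota$ yields a Lipschitz function on the unit sphere of $X$ with the same constant, and restriction to the (proper) subset $\iota(\text{sphere of }X)$ does not affect this. It therefore suffices to invoke, for $\Phi$ followed by $S^{vN}$ and by the purity $\tr(\cdot)^2$, the corresponding estimates of \cite{hayden2006aspects}.

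For part (i), the map $\ket{\psi} \mapsto S^{vN}(\tr_Z \op{\psi}{\psi})$ on the sphere of $Y \otimes Z$ is Lipschitz in the Euclidean norm provided $\dim Y \ge 3$: this follows from a Fannes-type continuity bound together with the fact that the trace distance of the two reduced states is controlled by the Euclidean distance of their purifications, and the dimension condition is exactly what keeps the continuity estimate from degenerating. Translated back through $d_Y + 1 = Q + 1$, this is the stated hypothesis $d_Y \ge 2$. For part (ii), the purity $\ket{\psi} \mapsto \tr[(\tr_Z \op{\psi}{\psi})^2]$ is a polynomial of bounded degree in the real and imaginary parts of the amplitudes, so its gradient is bounded on the compact sphere; hence it, and therefore the linear entropy $1 - \tr \rho^2$, is Lipschitz with no dimension restriction. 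Composition with $\iota$ delivers parts (i) and (ii).

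For part (iii) I would write the TMI as a fixed signed sum of von Neumann (resp.\ linear) entropies of the subsystems, $I_3 = S_A + S_B + S_C - S_{AB} - S_{AC} - S_{BC} + S_{ABC}$, which is just Eq.~(\ref{eqn:tmi_from_mi}) expanded. Each term is the entropy of a block of qubits of the global permutation-symmetric state $\ket{a_X}$, obtained by grouping the qubits of that block as the ``$Y$'' factor and the rest as ``$Z$''; by parts (i)--(ii) each such term is a Lipschitz function of $\ket{a_X}$ (when the complement is empty the corresponding $S$ is identically zero, trivially Lipschitz). A finite linear combination of Lipschitz functions is Lipschitz, with constant at most the sum of the individual constants, which gives the claim. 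In the von Neumann case the constraint $\dim Y \ge 3$ must hold for each of the seven blocks appearing; the binding requirements are the individual subsystems $A$, $B$, $C$, so demanding that each of these have at least two qubits suffices (then $AB$, $AC$, $BC$, $ABC$ are automatically large enough, and the total system has at least six qubits). The linear-entropy version needs no such restriction.

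The main obstacle is bookkeeping rather than anything deep. One must verify that $G$ genuinely factors through an honest bipartite Hilbert space so that the bounds of \cite{hayden2006aspects} apply \emph{verbatim} (in particular that the Lipschitz property is insensitive to feeding in only states of the form $\iota(a_X)$), that the metric used is the Euclidean/chordal one exactly as stated so that the linear-isometry step is exact and no geodesic-versus-chordal correction enters, and that the dimension hypothesis in the von Neumann bound is propagated correctly through all the reduced states occurring in the TMI so as to pin down the ``at least two qubits per subsystem / six qubits total'' condition.
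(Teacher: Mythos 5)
Your proposal is correct and follows essentially the same route as the paper: factor the reduced-state map through the isometric embedding $\iota$ of Proposition~\ref{propn:permsymmembedding} so that the Lipschitz bounds of \cite{hayden2006aspects} for the bipartite space $Y\otimes Z$ transfer verbatim to the permutation-symmetric sphere, and then obtain part (iii) by the triangle inequality on the signed sum of entropies with the constant bounded by the sum of the individual constants and the dimension condition imposed on each subsystem. The only deviation is in part (ii), where you invoke a generic ``polynomial with bounded gradient on the compact sphere'' argument; this does prove the stated Lipschitz continuity, but unlike the paper's derivation (which squares the $2$-Lipschitz function $\sqrt{\tr(\rho_Y^2)}$ from \cite{hayden2006aspects} and uses its boundedness to get the explicit dimension-independent constant $4$), it gives no control of the constant's dependence on dimension, which is what matters when the result is subsequently fed into L\'evy's lemma for large $N$.
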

\begin{proof}
    (i) In Ref.~\cite{hayden2006aspects} it has been shown that given any $\ket{\phi_{Y \otimes Z}} \in Y \otimes Z$, the function $f' \colon Y \otimes Z \to \mathbb{R}$ defined as $f'(\ket{\phi_{Y \otimes Z}}) = S^{vN}(\rho_Y)$ (where $\rho_Y$ is the reduced density matrix of subsystem $Y$) is Lipschitz continuous for $d_Y \geq 2$, with the Lipschitz constant bounded above by $\sqrt{8} \log(d_Y + 1)$. In other words,
    \begin{align*}
       & |f'(\ket{\phi_{Y \otimes Z}}) - f'(\ket{\psi_{Y \otimes Z}})|
                            \leq \\ & \sqrt{8} \log(d_Y + 1)) \norm{\ket{\phi_{Y \otimes Z}} - \ket{\psi_{Y \otimes Z}}}_2
    \end{align*}
    for $d_Y \geq 2$, where $\norm{\cdot}_2$ is the Euclidean norm. As this is valid for any vector in $Y \otimes Z$, it is, in particular, valid for $\iota(\ket{a_X}),\ \iota(\ket{b_X}) \in Y \otimes Z$ given any $\ket{a_X},\ \ket{b_X} \in X$. Noting that $f(\ket{a_X}) = f'(\iota(\ket{a_X})$, we get
    \begin{align*}
        |f(\ket{a_X} - f(\ket{b_X})| &\leq \sqrt{8} \log(d_Y + 1) \norm{\iota(\ket{a_X}) - \iota(\ket{b_X})}_2 \\
                                     &\leq \sqrt{8} \log(d_Y + 1) \norm{\ket{a_X} - \ket{b_X}}_2 \\
                                     &\hspace{0.75cm} \text{ (using linearity and isometry of $\iota$) }
    \end{align*}
    for $d_Y \geq 2$. Thus, $f(\ket{a_X}) = S^{vN}(G(\ket{a_X}))$ is Lipschitz continuous with the Lipschitz constant bounded above by $\sqrt{8} \log(d_Y + 1)$.
    \newline

    (ii) We start by showing that for any $\ket{a_X} \in X$, the function $f(\ket{a_X}) = \sqrt{\tr(G(\ket{a_X})^2)}$ is Lipschitz continuous. Again, we resort to the corresponding result for the case $\ket{\phi_{Y \otimes Z}} \in Y \otimes Z$ in \cite{hayden2006aspects}. In particular, for the function $f' \colon Y \otimes Z \to \mathbb{R}$ defined as $f'(\ket{\phi_{Y \otimes Z}}) = \sqrt{\tr(\rho_Y^2)}$, they showed that
    \begin{equation*}
        |f'(\ket{\phi_{Y \otimes Z}}) - f'(\ket{\psi_{Y \otimes Z}})| \leq 2 \norm{\ket{\phi_{Y \otimes Z}} - \ket{\psi_{Y \otimes Z}}}_2
    \end{equation*}
    As before, taking any $\ket{a_X},\ \ket{b_X} \in X$, we apply the above for $\iota(\ket{a_X}),\ \iota(\ket{b_X}) \in Y \otimes Z$ while noting that $f(\ket{a_X}) = f'(\iota(\ket{a_X}))$ to get
    \begin{equation*}
        |f(\ket{a_X}) - f(\ket{b_X})| \leq 2 \norm{\ket{a_X} - \ket{b_X}}_2
    \end{equation*}
    where we have utilized the linearity and isometry of $\iota$. Now observe that $f$ is bounded above by $1$ since $\tr(\rho^2) \leq 1$ for any density matrix $\rho$. Thus, $|f^2(\ket{a_X}) - f^2(\ket{b_X})| = |f(\ket{a_x}) - f(\ket{b_X})| |f(\ket{a_X}) + f(\ket{b_X})| \leq 4 \norm{\ket{a_X} - \ket{b_X}}_2$, where the triangle inequality, upper bound of $f$ and the above inequality has been used. Therefore, $f^2$ is Lipschitz continuous with the Lipschitz constant bounded above by $4$. As the linear entropy corresponding to the state $\ket{a_X}$ is given by $1 - f^2(\ket{a_X})$, it is Lipschitz continuous with the Lipschitz constant bounded above by $4$.
    \newline

    (iii) Let $A$, $B$, $C$ with dimensions $d_A + 1$, $d_B + 1$ and $d_C + 1$ respectively be subspaces of the space $X$ with dimension $d_X + 1$. We abbreviate the space of the joint permutation symmetric systems as follows: $AB$ having dimension $d_A + d_B + 1$, $BC$ having dimension $d_B + d_C + 1$, $AC$ having dimension $d_A + d_C + 1$ and $ABC$ having dimension $d_A + d_B + d_C + 1$. Further, these spaces are such that $d_A + d_B + d_C \leq d_X$. Now, say $f_Y\colon X \to \mathbb{R}$ is either the von Neumann entropy or the linear entropy (corresponding to a given subsystem $Y$). We know that $f_Y$ is Lipschitz continuous with the Lipschitz constant bounded above by $\eta_Y$, which is equal to $\sqrt{8} \log(d_Y + 1)$ for the case of von Neumann entropy and equal to $4$ for the case of linear entropy.

    Let $f^{TMI}\colon X \to \mathbb{R}$ be defined as 
    \begin{align*}
       & f^{TMI}(\ket{a_X}) = f_A(\ket{a_X}) + f_B(\ket{a_X}) + f_C(\ket{a_X}) \\
                                                  &- f_{AB}(\ket{a_X}) - f_{BC}(\ket{a_X}) - f_{AC}(\ket{a_X}) + f_{ABC}(\ket{a_X})
    \end{align*}
    Then for any vectors $\ket{a_X}, \ket{b_X} \in X$, using triangle inequality and the Lipschitz continuity of $f_Y$, we get
    \begin{align*}
        &|f^{TMI}(\ket{a_X}) - f^{TMI}(\ket{b_X})|
                               \leq \\ &(\eta_A + \eta_B + \eta_C + \eta_{AB} + \eta_{BC} 
                                     + \eta_{AC} + \eta_{ABC}) \norm{\ket{a_X} - \ket{b_X}}_2
    \end{align*}
    Therefore $f^{TMI}$ is Lipschitz continuous with the Lipschitz constant bounded above by $\eta = (\eta_A + \eta_B + \eta_C + \eta_{AB} + \eta_{BC} + \eta_{AC} + \eta_{ABC})$. In other words, the TMI defined on permutation symmetric spaces is Lipschitz continuous for either von Neumann entropy or linear entropy. Note that for Lipschitz continuity of von Neumann entropy, we require $d_Y \geq 2$ for each of the subsystems.
\end{proof}

Now that we have the von Neumann entropy, linear entropy and TMI as Lipschitz continuous functions on permutation symmetric systems, we can resort to L\'evy's lemma. To recall, L\'evy's lemma is stated as follows \cite{hayden2006aspects}.

\begin{lemma}[L\'evy]
    Let $f\colon S^{n - 1} \to \mathbb{R}$ be a Lipschitz continuous function \textnormal{(}with respect to the Euclidean norm in $\mathbb{R}^{n}$\textnormal{)} with a Lipschitz constant $\eta$, where $S^{n - 1}$ is a unit sphere in $\mathbb{R}^{n}$. If $x \in S^{n - 1}$ is picked randomly \textnormal{(}with respect to the uniform measure on the sphere\textnormal{)}, then for $\epsilon \geq 0$ we have
    \begin{equation*}
        \mathbb{P}\{|f(x) - \mathbb{E}[f]| \geq \epsilon\} \leq 2 \exp\left(- \frac{n \epsilon^2}{9 \pi^3 \ln(2) \eta^2}\right)
    \end{equation*}
    where $\mathbb{E}[f]$ is the expectation value of $f$ with respect to the uniform measure on the sphere.
\end{lemma}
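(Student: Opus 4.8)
The plan is to derive L\'evy's lemma from the \emph{spherical isoperimetric inequality} of L\'evy and Schmidt: among all Borel subsets of $S^{n-1}$ of a prescribed measure, a geodesic cap has the smallest $t$-enlargement in the geodesic metric. Since the chord length between two points of $S^{n-1}$ never exceeds their geodesic distance, a map that is $\eta$-Lipschitz for $\norm{\cdot}_2$ is a fortiori $\eta$-Lipschitz for the geodesic metric, so throughout I would work with the geodesic distance and the uniform probability measure $\mu$.

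First I would concentrate $f$ about a \emph{median} $M$. Put $A=\{x:f(x)\le M\}$, so $\mu(A)\ge 1/2$, and let $A_t$ be its geodesic $t$-enlargement; Lipschitz continuity gives $A_t\subseteq\{f\le M+\eta t\}$. Comparing $A$ with a hemisphere, whose measure is $1/2\le\mu(A)$, the isoperimetric inequality bounds $\mu(A_t^{c})$ by the measure of the complementary cap of geodesic radius $\pi/2-t$, which a one-line estimate ($\cos s\le e^{-s^2/2}$ and the asymptotics of the normalizing integral) shows is at most $c_0\,e^{-(n-1)t^2/2}$ for an absolute constant $c_0$ (one may take $c_0=\sqrt{\pi/8}$, with a harmless $(n-1)$-versus-$(n-2)$ ambiguity in the exponent). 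Choosing $\eta t=\epsilon$ yields $\mu\{f>M+\epsilon\}\le c_0\,e^{-(n-1)\epsilon^2/(2\eta^2)}$, and running the same argument on $\{f\ge M\}$ controls the lower tail, so $\mu\{|f-M|\ge\epsilon\}\le 2c_0\,e^{-(n-1)\epsilon^2/(2\eta^2)}$.

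Next I would pass from the median to the mean. Integrating the tail bound, $|\mathbb{E}[f]-M|\le\mathbb{E}|f-M|=\int_0^\infty\mu\{|f-M|\ge s\}\,ds$, which is $O(\eta/\sqrt n)$; hence the mean lies within an $O(\eta/\sqrt n)$ window of the median. Substituting $\epsilon\mapsto\epsilon-|\mathbb{E}[f]-M|$ in the previous display bounds $\mu\{|f-\mathbb{E}f|\ge\epsilon\}$, and since the claimed right-hand side $2\exp(-n\epsilon^2/(9\pi^3\ln 2\,\eta^2))$ already exceeds $1$ — hence is trivially valid — whenever $\epsilon$ is of order $\eta/\sqrt n$ or smaller, in the remaining range the median-mean shift, the factor $n$ versus $n-1$, and the accrued numerical constants are all dominated by the deliberately generous constant $9\pi^3\ln 2$; the only thing to verify is that $2c_0\,e^{-(n-1)(\epsilon-O(\eta/\sqrt n))^2/(2\eta^2)}\le 2\exp(-n\epsilon^2/(9\pi^3\ln 2\,\eta^2))$ on that range, which is a routine inequality.

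The genuine obstacle is the first ingredient: the spherical isoperimetric inequality is a nontrivial geometric theorem that I would quote (L\'evy; Schmidt; see the accounts by Ledoux or by Milman and Schechtman) rather than reprove. A self-contained alternative I would mention replaces isoperimetry by the semigroup method: the round sphere obeys a logarithmic Sobolev inequality with constant of order $1/n$, and Herbst's argument then bounds the moment generating function of $f-\mathbb{E}[f]$ directly for $\eta$-Lipschitz $f$, giving sub-Gaussian concentration about the mean with no median detour and arguably cleaner constants — at the cost of importing the log-Sobolev constant of $S^{n-1}$ as the deep input instead. Either way the rest — Lipschitz bookkeeping, the median-to-mean step, and absorbing everything under the stated constant — is elementary.
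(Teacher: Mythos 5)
Your proposal is correct in outline, but there is nothing in the paper to compare it against: the paper does not prove this lemma at all. It is imported verbatim, with the constant $9\pi^3\ln 2$, from Hayden, Leung and Winter \cite{hayden2006aspects} (who in turn rely on the standard concentration-of-measure literature of L\'evy, Schmidt, Ledoux, and Milman--Schechtman), and the paper's own contribution in that appendix is only the adaptation step — showing that the von Neumann entropy, linear entropy and TMI of reduced states, viewed as functions on the permutation symmetric sphere via the isometric embedding $\iota$, are Lipschitz — after which the lemma is simply invoked. Your reconstruction is precisely the classical route hidden behind that citation: spherical isoperimetry reduces the problem to geodesic caps, giving sub-Gaussian concentration about a median with constant of order $(n-1)/2$ in the exponent; the median-to-mean shift of order $\eta/\sqrt{n}$ and the cap-measure prefactor are then swallowed by the deliberately loose constant $9\pi^3\ln 2$, exactly as you say (note that $2\exp\bigl(-n\epsilon^2/(9\pi^3\ln 2\,\eta^2)\bigr)\ge 1$ for $\epsilon\lesssim 11\,\eta/\sqrt{n}$, so the trivial regime is wide enough to absorb the shift, and for very small $n$ the geodesic-diameter bound $|f-\mathbb{E}[f]|\le \pi\eta$ handles the $(n-1)$-versus-$(n-2)$ ambiguity you flag). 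The log-Sobolev/Herbst alternative you mention is likewise a legitimate, mean-centered route. So your argument is sound modulo the deep isoperimetric (or log-Sobolev) input that you correctly quote rather than reprove and the advertised ``routine'' constant bookkeeping; just be aware that in the context of this paper the lemma is a black box, and the only proof obligation the authors take on is the Lipschitz continuity on the embedded permutation symmetric subspace, which your proposal does not need to address.
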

Thus, for large enough dimensions, we can say that almost all random permutation symmetric states have von Neumann entropy, linear entropy and TMI close to the respective averages. In particular, a positive average TMI implies that most states also have positive TMI.

\end{document}